\documentclass[journal]{IEEEtran} 
\IEEEoverridecommandlockouts
\usepackage{etoolbox}

\usepackage[ruled,linesnumbered]{algorithm2e}
\usepackage{tabularx}
\usepackage{booktabs} 
\usepackage{longtable}
\usepackage{multicol}
\usepackage{multirow} 
\usepackage{graphicx}  
\usepackage{float}  
\usepackage{subcaption}

\usepackage{amsmath,amssymb,amsthm}  

\usepackage{stmaryrd}
\usepackage{bm}
\usepackage{array}
\usepackage{xcolor}

\usepackage{makecell}
\usepackage{diagbox}
\usepackage{placeins}
\usepackage{cite} 

\newtheorem{definition}{Definition}
\newtheorem{theorem}{Theorem}
\newtheorem{lemma}{Lemma}

\newtheorem{proposition}{Proposition}

\newtheorem{property}{Property}

\AtBeginEnvironment{definition}{\pushQED{\qed}}
\AtEndEnvironment{definition}{\popQED}

\usepackage{tikz}
\usetikzlibrary{automata, positioning, calc, arrows.meta, shapes,fit, backgrounds}
\usepackage[colorlinks=true, allcolors=blue]{hyperref}

\title{Evolution-Based Timed Opacity under a Universal Observation Model}
\author{Zhe Zhang, Martijn Goorden, and Michel Reniers%
\thanks{The authors are with the Department of Mechanical Engineering, Eindhoven University of Technology, Eindhoven, The Netherlands.
Email: {\tt\small \{z.zhang4, m.a.goorden, m.a.reniers\}@tue.nl}}
\thanks{This work is the full version of a paper submitted to the 65th IEEE Conference on Decision and Control (CDC 2026), providing detailed technical proofs omitted from the conference version due to space constraints.}}

\begin{document}  
\maketitle
\thispagestyle{empty}
\pagestyle{empty}
\begin{abstract}
Existing literature on timed opacity uses specific definitions for restricted subclasses of timed automata or limited observation models. This lack of a unified definition makes it difficult to establish formal relationships and compare the expressiveness of different opacity variants. This paper establishes a unified framework for timed opacity by introducing a universal observation model for timed automata. First, we introduce an observation model with full observation of time delay and partial observation of locations, clocks, and events. Second, based on this model, we define the notion of evolution-based timed opacity. Third, we mathematically prove that evolution-based timed opacity strictly implies language-based timed opacity and establish a formal equivalence with execution-time opacity under constrained observations. This framework establishes a unified semantic hierarchy for characterizing the landscape of timed opacity.
\end{abstract}

\section{Introduction}
In cyber-physical systems (CPS), information security is critical, particularly for safety-critical infrastructures operating under strict timing constraints~\cite{HARKAT2024109891}. Beyond traditional access control, \emph{opacity} provides a rigorous confidentiality guarantee, ensuring that the ``secret'' behaviors of a system remain plausibly deniable to external observers~\cite{Saboori2012}. While opacity has been extensively studied in discrete-event systems (DES)~\cite{Lin2011, Wintenberg2022,Xie2022}, the integration of dense time introduces significant complexity~\cite{cassez2009dark}. In real-time systems modeled by \emph{timed automata} (TA)~\cite{alur1994theory}, the precise timing of events serves as a critical information channel, potentially leaking secrets that would remain hidden in an untimed abstraction.

Since Cassez established the ``dark side'' of timed opacity, specifically that the problem is generally undecidable for TA~\cite{cassez2009dark}, subsequent research has largely focused on identifying a ``bright side'' where verification becomes decidable~\cite{andre2024bright}. To achieve decidability, these efforts typically adopt a pragmatic ``bottom-up'' strategy: they define opacity tailored to specific subclasses of TA or restricted observation models~\cite{Klein2024}. On the system modeling side, structural constraints are often utilized to ensure decidability. For instance,~\cite{An2025} explored structural limits, defining opacity within specific subclasses like event-recording TA, while~\cite{deng2025initial} achieved decidability by focusing on integer reset TA. Similarly, other works simplify the timing structure by associating constraints directly with transitions, such as the real-time labeled TA used by~\cite{Zhang2024} and the time interval automata used by~\cite{Marques2023}. These structures implicitly reset the global clock at each step, effectively limiting the temporal memory to the immediate transition. On the observation side, complexity is often managed by restricting the observer's capabilities. For instance,~\cite{Andr2023} assumes a constrained observer model where observability is limited to execution times. The work in~\cite{Ammar2021} further restricts the observer’s power by imposing a bounded time horizon, where safety is guaranteed within a specific finite duration. Furthermore, approaches in~\cite{Klein2024} simplify the observation model to discrete-time semantics, effectively abstracting away the dense nature of time to facilitate verification. While successful in their specific domains, this ``decidability-first'' approach has led to diverse definitions designed for specific contexts. Existing definitions largely generalize the concepts of untimed opacity, extending confidentiality to the timed domain primarily through the use of timed languages (e.g., language-based timed opacity).

However, this focus on specific subclasses has resulted in a lack of theoretical cohesion. Existing definitions, including language-based timed opacity and execution-time opacity, are often formulated independently to suit specific constraints, resulting in a fragmented landscape. This fragmentation presents two structural challenges: (1) there is a lack of a systematic understanding of how these diverse opacity notions relate to one another; and (2) current formalisms suffer from limited expressive power, failing to provide a unified basis for reasoning about opacity when intruders possess diverse observational capabilities. For instance, existing language-based abstractions exhibit a structural limitation, which we term \emph{suffix blindness}. Because the abstraction of timed words discards any time progress after the last observable discrete event, it fails to capture confidential behaviors leaked through trailing delays. To address these challenges, we propose a top-down semantic framework. Rather than restricting the model to ensure decidability, we establish a comprehensive semantic baseline that inherently captures the continuous dynamics of TA. The main contributions of this paper are threefold:
\begin{itemize}
  \item We introduce an observation model for TA that accounts for the partial observability of locations, clocks, and events (Section~\ref{sec:observation model}). 
  \item Based on the observation model, we formulate a new opacity notion, evolution-based timed opacity (EBTO), that rigorously captures continuous state evolution and dense time dynamics (Section~\ref{sec:notion of EBTO}). 
  \item Finally, we establish a formal hierarchy of timed opacity. We prove that EBTO provides a strictly stronger confidentiality guarantee than language-based timed opacity (LBTO), and demonstrates structural equivalence with execution-time opacity (ETO) under constrained observations (Section~\ref{sec:translation}). 
\end{itemize}
This systematic perspective allows existing opacity problems to be analyzed not as isolated cases, but as specific applications of a unified theory.



\section{Preliminaries}\label{sec:preliminaries}
A TA is a finite automaton extended with a finite set of real-valued clocks. The temporal behaviors of a TA are captured by clock constraints, which serve as guards on edges and invariants on locations. To define TA formally, we first establish the syntax of these constraints.

Given a finite set of real-valued clocks $C$, a \emph{clock constraint}~\cite{alur1994theory} over $C$ is defined as:
\[
\varphi ::= \text{true} \mid x \sim n \mid x - y \sim n \mid \varphi \land \varphi \mid \varphi \lor \varphi
\]
where $x, y \in C$, $n \in \mathbb{N}$, and the comparison operator $\sim$ belongs to the set $\{<, \leq, =, \geq, >\}$. We denote the set of all such clock constraints over $C$ by $\mathcal{G}(C)$ .

While constraints define the logical conditions for timing, the state of the system's time is captured by clock valuations. The following definition formalizes how clocks hold values and how these values change through resets and time elapsing.

Given a finite set of clocks $C$, a \emph{clock valuation} is a function $u : C \rightarrow \mathbb{R}_{\geq 0}$ that assigns a non-negative real value to each clock $x \in C$. We denote by $\mathcal{U}$ the set of all such valuations, i.e., $\mathcal{U} = C \to \mathbb{R}_{\ge 0}$. We denote by $\mathbf{0}$ the \emph{zero valuation}, i.e., the unique valuation in $\mathcal{U}$ such that $\mathbf{0}(x) = 0$ for all $x \in C$.

We define the following operations on clock valuations:
\begin{enumerate}
    \item Reset: Given a set of clocks $r \subseteq C$, the \emph{reset} of $u$ on $r$, denoted $u[r]$, is the valuation defined by:
    \[
    u[r](x) = 
    \begin{cases}
        0 & \text{if } x \in r, \\
        u(x) & \text{otherwise}.
    \end{cases}
    \]
    Intuitively, clocks in $r$ are reset to zero, while others remain unchanged.

    \item Time elapse: Given a delay $\tau \in \mathbb{R}_{\geq 0}$, the \emph{time-elapsed} valuation $u + \tau$ is defined as:
    \[
    (u + \tau)(x) = u(x) + \tau \quad \text{for all } x \in C.
    \]
    This operation models the uniform elapsing of time for all clocks.
\end{enumerate}

We say a clock valuation $u$ satisfies a clock constraint $\varphi$, denoted as $u\models\varphi$, if $\varphi$ is \emph{true} for the values assigned by $u$ to each clock. With these preliminaries established, we can now formally define the structure of a TA.

A \emph{TA}~\cite{alur1994theory} is a six tuple $(L, L_0, C, \Sigma, E, I)$ where  
  \begin{enumerate}
    \item $L$ is a finite set of locations;
   \item $L_0 \subseteq L$ is a finite set of initial locations;
    \item $C$ is a finite set of clocks;
    \item $\Sigma$ is a finite set of events;
    \item $E \subseteq L \times \Sigma \times \mathcal{G}(C) \times 2^{C} \times L$ is a finite set of edges;
    \item $I : L \rightarrow \mathcal{G}(C)$ assigns to each location $l \in L$ an \emph{invariant} $I(l)$, which is a clock constraint from $\mathcal{G}(C)$.
  \end{enumerate}

To analyze the internal operational semantics of a TA, specifically how it transitions between states via discrete events or time delays, we introduce the \emph{semantic graph} (SG) of a TA. The SG represents the infinite state space of a TA, where states consist of locations coupled with concrete clock valuations.

\begin{definition}[SG~\cite{Rashidinejad2024}]\label{SG}
Given a TA $\mathcal{A} = (L, L_0, C, \Sigma, E, I)$, its \emph{SG} $\mathcal{T} = (S, S_0, \Gamma, \delta)$ denoted as $\mathcal{S}(\mathcal{A})$:
\begin{enumerate}
     \item $S = \{(l, u) \in L \times \mathcal{U} \mid u \models I(l)\}$ is the set of states;
     \item $S_0 = \{(l_0, \mathbf{0}) \mid l_0 \in L_0 \land \mathbf{0} \models I(l_0)\}$ is the set of initial states. If $S_0 = \varnothing$, the semantic graph is undefined;
    \item $\Gamma = \Sigma \cup \mathbb{R}_{\geq 0}$ is the a set of actions;
    \item $\delta \subseteq S \times \Gamma \times S$ is the transition relation, defined as the smallest relation satisfying:
    \begin{itemize}
        \item (\emph{Event transition})  
        $((l_s, u_s), \sigma, (l_t, u_s[r])) \in \delta$ if there exists $(l_s, \sigma, g, r, l_t) \in E$, such that $u_s \models g\wedge I(l_s)$, and $u_s[r] \models I(l_t)$;
        
        \item (\emph{Time transition})  
        $((l, u), \tau, (l, u + \tau)) \in \delta$ if $\forall \theta \in [0, \tau]$, $u + \theta \models I(l)$.\qedhere
    \end{itemize}
\end{enumerate}
\end{definition}

A fundamental feature of continuous time semantics is that the passage of time can be arbitrarily split or combined, and zero-time delays do not alter the system state. These properties are known as time additivity, continuity, and reflexivity~\cite{fares2013automatic}.

\begin{property}[Time additivity, continuity, and reflexivity]\label{property:time additivity}
Let $\mathcal{A} = (L, L_0, C, \Sigma, E, I)$ be a TA and let $\mathcal{T} = (S, S_0, \Gamma, \delta)$ be its SG. For all $l \in L$, $u \in \mathcal{U}$ and $\tau, \tau' \in \mathbb{R}_{\ge 0}$:
\[
((l, u), \tau, (l, u+\tau)) \in \delta \land ((l, u+\tau), \tau', (l, u+\tau+\tau')) \in \delta
\]
\[
\iff ((l, u), \tau+\tau', (l, u+\tau+\tau')) \in \delta,
\]
\[
((l, u), 0, (l, u)) \in \delta.
\]
\end{property}

Within a SG, the dynamic behavior of the system is captured as a path through the state space. We define this path as an \emph{evolution}. An evolution\footnote{Throughout this paper we only consider finite length evolutions; consequently, issues related to Zeno behavior are inherently avoided.}
of a SG $\mathcal{T}=(S,S_0,\Gamma,\delta)$ is a sequence with alternating states and actions of the form
\[
\rho \;=\; s^{0}\,\gamma^{0}\,s^{1}\,\gamma^{1}\,\cdots\,\gamma^{n-1}\,s^{n} \;\in\; S\cdot(\Gamma\cdot S)^*,
\]
for some $n\in\mathbb{N}$, such that $(s^{i},\gamma^{i},s^{i+1})\in\delta$ for all $i<n$. For readability, we henceforth write evolutions in arrow form:
\[
s^0\xrightarrow{\gamma_0}s^1\xrightarrow{\gamma_1}\cdots\xrightarrow{\gamma_{n-1}}s^n.
\]
Among all possible evolutions in a system, we are specifically interested in those that represent valid system runs starting from one of the initial states. Given a SG $\mathcal{T}=(S,S_0,\Gamma,\delta)$, an evolution $\rho$ is \emph{generated} iff $s^{0}\in S_0$. We denote the set of all generated evolutions by $\mathrm{Gen}^{\mathcal{T}}\subseteq S\cdot(\Gamma\cdot S)^*$.

Finally, we formally define determinism for both the SG and the TA. Let $\mathcal{T}=(S,S_0,\Gamma,\delta)$ be a SG and $\mathcal{A}$ be a TA.
\begin{enumerate}
  \item A SG $\mathcal{T}$ is deterministic iff
  \[
  \begin{aligned}
    \forall &s\in S,\forall \gamma\in\Gamma:\ 
    \big|\{\,s'\in S \mid (s,\gamma,s')\in\delta\,\}\big|\le 1\\
    &\text{and}\quad |S_0|=1.
  \end{aligned}
  \]
  \item A TA $\mathcal{A}$ is deterministic iff its SG $\mathcal{S}(\mathcal{A})$ is a deterministic SG.
\end{enumerate}
All TA and SG considered in the remainder of this paper are deterministic. Consequently, the initial location set of any considered TA is restricted to a single initial location (typically denoted as $l_0$).

\section{The Universal Observation Model}\label{sec:observation model}
Generally, in DES, opacity requires that a system's secret behavior remains plausibly deniable to an external observer who observes system executions through a projection map (masking unobservable events)~\cite{saboori2007notions}. However, the introduction of dense time complicates this picture significantly. In real-time contexts, the precise \emph{timing} of observations serves as an additional, powerful information channel. An observer might distinguish between two sequences of identical observable events simply by analyzing the delays between them, potentially uncovering secrets that would remain hidden in an untimed abstraction~\cite{An2025}. 

First, we formalize the observer's observational capabilities. To capture the full semantics of system executions, we consider an observation model where the observer has partial visibility over locations, clock valuations, and discrete events. Specifically, each domain is partitioned into disjoint observable and unobservable subsets: $L \;=\; L_{\mathrm{obs}} \;\cup\; L_{\mathrm{uo}}, C \;=\; C_{\mathrm{obs}} \;\cup\; C_{\mathrm{uo}}, \Sigma \;=\; \Sigma_{\mathrm{obs}} \;\cup\; \Sigma_{\mathrm{uo}},$
where
\begin{enumerate}
  \item \(L_{\mathrm{obs}}\) (resp.\ \(L_{\mathrm{uo}}\)) is the set of observable (resp.\ unobservable) locations;
  \item \(C_{\mathrm{obs}}\) (resp.\ \(C_{\mathrm{uo}}\)) is the set of observable (resp.\ unobservable) clocks;
  \item \(\Sigma_{\mathrm{obs}}\) (resp.\ \(\Sigma_{\mathrm{uo}}\)) is the set of observable (resp.\ unobservable) event labels.
\end{enumerate}
Along any system evolution, we assume that time delays are fully observable. This setting reflects practical scenarios where an observer can access the wall-clock time using a local timing device. While this aligns with the premise of a globally accessible clock in standard timed-language based works (e.g., \cite{Klein2024,An2025,deng2025initial,Zhang2024, Ammar2021}), relying strictly on timed language results the \emph{suffix blindness} problem. In standard timed language-based works, any time progress occurring after the last observable event is truncated and lost in the projection. To overcome this limitation, our observation model operates over continuous state evolutions rather than event sequences, preserving these trailing delays. A formal comparison illustrating how language-based timed opacity fails under suffix blindness is discussed in Section~\ref{sec:translation lbto}.

Table \ref{tab:observer_comparison} compares our settings with other timed opacity notions. Notably, many existing works focus on event-based observations, treating locations ($L$) and clock valuations ($C$) as fully unobservable. The inclusion of $L_{\mathrm{obs}}$ and $C_{\mathrm{obs}}$ is motivated by practical scenarios where sensors provide partial information about locations and clock values. This modeling choice is conceptually aligned with the hybrid observation structures established for untimed DES, where state information is partially accessible~\cite{Ru2009,Shu2007}. By incorporating $L_{\mathrm{obs}}$ and $C_{\mathrm{obs}}$, our observation model provides a more flexible observational capability for an observer that can be reduced to these classical event-centric models under specific constraints (e.g., by setting $L_{\mathrm{obs}}=C_{\mathrm{obs}} = \varnothing$).

\begin{table}[htbp]
    \centering
    \caption{Comparison of Observer Capabilities in Timed Opacity}
    \label{tab:observer_comparison}
    \small
    \begin{tabular}{@{} l l c c c c @{}}
        \toprule
        \multirow{2}{*}{\textbf{Reference}} & \multirow{2}{*}{\textbf{Model}} & \multicolumn{4}{c}{\textbf{Observable Information}} \\
        \cmidrule(lr){3-6}
         &  & $\Sigma$ & delays & $L$ & $C$ \\
        \midrule
        
        \cite{Klein2024} & 
        discrete TA & 
        $\checkmark$ & $\checkmark$ & $\times$ & $\times$\\

        \cite{An2025} & 
        standard TA & 
        $\checkmark$ & $\checkmark$ & $\times$ & $\times$\\

        \cite{deng2025initial} & 
        integer reset TA & 
        $\checkmark$ & $\checkmark$ & $\times$ & $\times$\\

        \cite{Zhang2024} & 
        real-time labeled TA & 
        $\checkmark$ & $\checkmark$ & $\times$ & $\times$\\

        \cite{Marques2023} & 
        time interval automata & 
        $\checkmark$ & $\checkmark$ & $\times$ & $\times$\\

        \cite{Andr2023} & 
        standard TA & 
        $\times$ & $\checkmark$ & $\times$ & $\times$\\

        \cite{Ammar2021} & 
        standard TA & 
        $\checkmark$ & $\checkmark_{b}$ & $\times$ & $\times$\\

        \textbf{Ours} & 
        \textbf{standard TA} & 
        $\checkmark$ & $\checkmark$ & $\boldsymbol{\checkmark}$ & $\boldsymbol{\checkmark}$\\
        \bottomrule
    \end{tabular}
    
    \smallskip
    \footnotesize
    \raggedright
    Symbols: $\checkmark$: Observable (or Partially Observable); $\times$: Unobservable; $\checkmark_{b}$: Bounded time observation.
\end{table}

To represent information hidden from the observer, we introduce special symbols: a location symbol $l_\epsilon \notin L$, an event symbol $\sigma_\epsilon \notin \Sigma$, and a clock value $u_\epsilon \notin \mathbb{R}_{\geq 0}$. These symbols act as masks for unobservable components in the following mapping.

\begin{definition}[Observation mapping]\label{observation mapping}
Let $\Gamma = \Sigma \cup \mathbb{R}_{\geq0}$ be the set of actions, where $\Sigma=\Sigma_{\mathrm{obs}} \cup \Sigma_{\mathrm{uo}}$, $L = L_{\mathrm{obs}} \cup L_{\mathrm{uo}}$, and $C = C_{\mathrm{obs}} \cup C_{\mathrm{uo}}$ are the sets of events, locations, and clocks, respectively. 

We define the set of \emph{observable clock valuations}, denoted $\mathcal{U}_{\mathrm{obs}}$, as the set of functions mapping clocks to either a non-negative real value or the unobservable symbol: $\mathcal{U}_{\mathrm{obs}} = C \to \mathbb{R}_{\ge 0} \cup \{u_\epsilon\}.$

We define the observation mappings on locations, events, and clock valuations as $\mathcal{M}_L: L\to L_{\mathrm{obs}}\cup\{l_\epsilon\},\mathcal{M}_\Gamma: \Gamma\to \Sigma_{\mathrm{obs}}\cup\{\sigma_\epsilon\}\cup \mathbb{R}_{\geq0},$ and $\mathcal{M}_C: \mathcal U\to \mathcal{U}_{\mathrm{obs}}$ by
\[
\begin{aligned}
\mathcal{M}_L(l)=&\begin{cases} l,& l\in L_{\mathrm{obs}},\\ l_\epsilon,& l\in L_{\mathrm{uo}},\end{cases}\\
\mathcal{M}_\Gamma(\gamma)=&\begin{cases} \gamma,& \gamma\in \Sigma_{\mathrm{obs}}\cup\mathbb{R}_{\geq0},\\ \sigma_\epsilon,& \gamma\in \Sigma_{\mathrm{uo}},\end{cases}\\
\mathcal{M}_C(u)(c)= 
&\begin{cases} 
u(c), & c \in C_{\mathrm{obs}}, \\
u_\epsilon, & c \in C_{\mathrm{uo}}.\qquad\qquad\qquad\qquad\quad\quad\qedhere
\end{cases}
\end{aligned}
\]
\end{definition}
To evaluate the time elapse operation on observable clock valuations, we extend the addition operation to the domain $\mathcal{U}_{\mathrm{obs}}$. For any $u \in \mathcal{U}_{\mathrm{obs}}$ and $\tau \in \mathbb{R}_{\geq 0}$, the valuation $u + \tau$ is defined as:\[(u + \tau)(c) = 
\begin{cases} 
    u(c) + \tau, & \text{if } u(c) \in \mathbb{R}_{\geq 0}, \\
    u_\epsilon, & \text{if } u(c) = u_\epsilon.
\end{cases}\]

Building upon Definition~\ref{observation mapping}, we can naturally extend the observation mapping to evolutions. Let $\mathcal{A}=(L,l_0,C,\Sigma,E,I)$ be a TA and let 
$\mathcal{T}=\mathcal{S}(\mathcal{A})=(S,S_0,\Gamma,\delta)$ be its SG (with $S= L\times\mathcal U$ and $\Gamma=\Sigma\cup\mathbb{R}_{\ge0}$).
Let $\mathcal{M}_\rho:\ S\cdot(\Gamma\cdot S)^*\ \to\ S^{o}\cdot(\Gamma^{o}\cdot S^{o})^*$ denote the observation mapping on evolutions,
where
\[
S^{o} := (L_{\mathrm{obs}}\cup\{l_\epsilon\})\times \mathcal{U}_{\mathrm{obs}},
\qquad
\Gamma^{o} := \Sigma_{\mathrm{obs}}\cup\{\sigma_\epsilon\}\cup\mathbb{R}_{\ge0}.
\]
We first define the observation of a single state $s=(l,u)\in S$ as the component-wise projection $\mathcal{M}_{\mathfrak{S}}(s) = (\mathcal{M}_L(l), \mathcal{M}_C(u))$. The mapping $\mathcal{M}_\rho$ is then defined inductively on the structure of evolutions:
\begin{itemize}
    \item Base case: For an evolution of length\footnote{We measure the length of an evolution by the number of actions. For each $s\in S$, the sequence $(s)\in S\cdot(\Gamma\cdot S)^*$ has length $0$. We treat $S$ and $S\cdot(\Gamma\cdot S)^*$ as disjoint; in particular, $(s)\neq s$. For clarity, evolutions of positive length continue to be written \emph{without} parentheses, e.g., $s_0\cdot\gamma_0\cdot s_1\cdot\cdots$, and we reserve parentheses \emph{only} for the zero-length case. 
} 0 consisting of a single state $s \in S$, $\mathcal{M}_\rho((s)) = \mathcal{M}_{\mathfrak{S}}(s).$ 
    \item Inductive step: For an evolution $\rho = \rho' \cdot \gamma \cdot s$, where $\rho'$ is another evolution, $\gamma \in \Gamma$ is an action, and $s \in S$ is the resulting state:
    \[
    \mathcal{M}_\rho(\rho' \cdot \gamma \cdot s) = \mathcal{M}_\rho(\rho') \cdot \mathcal{M}_\Gamma(\gamma) \cdot \mathcal{M}_{\mathfrak{S}}(s).
    \]
\end{itemize}


To lighten notation, in what follows we write $\mathcal{M}$ in place of $\mathcal{M}_\rho$ whenever the argument is an evolution. 
The component-wise mappings keep their subscripts $\mathcal{M}_L$, $\mathcal{M}_\Gamma$, and $\mathcal{M}_C$.

However, the observation mapping $\mathcal{M}$ alone is insufficient to fully capture the observer's perspective. In a dense-time setting, the observer cannot distinguish between a continuous delay and a fragmented delay interrupted by unobservable events. Therefore, we define an equivalence relation that normalizes these variations, treating semantically distinct but observationally identical behaviors as equivalent.

\begin{definition}[Observational equivalence]\label{def:obs-equivalence}
Let $\mathcal{A}=(L,l_0,C,\Sigma,E,I)$ be a TA and $\mathcal T=(S,s_0,\Gamma,\delta)$ be its SG. We define two elementary equivalence relations on evolutions based on left contexts $\alpha\in (S\cdot\Gamma)^*\ \cup\ (S^{o}\cdot\Gamma^{o})^*$ and right contexts $\beta\in (\Gamma\cdot S)^*\ \cup\ (\Gamma^{o}\cdot S^{o})^*$:

\begin{enumerate}
    \item Silent-step equivalence ($\equiv_\epsilon$):
    Defined as the smallest equivalence relation satisfying the following condition which identifies evolutions where $\sigma_\epsilon$-transitions do not result in a change of the original state or observed state. For any state $s \in S \cup S^o$:
    \[\alpha\, s \xrightarrow{\sigma_\epsilon} s\, \beta \;\;\equiv_\epsilon\;\;\alpha\,s\,\beta.\]
    \item Time additivity, continuity, and reflexivity equivalence ($\equiv_\tau$):
    Defined as the smallest equivalence relation satisfying the following conditions which identify evolutions that differ by the fragmentation of delays or zero delay. For any state $(l,u)\in S \cup S^o$ and delays $\tau, \tau' \in \mathbb{R}_{\geq 0}$:
    \[
    \begin{aligned}
    \alpha\,(l,u)\xrightarrow{\tau}(l,u{+}\tau)\xrightarrow{\tau'}(l,u{+}\tau{+}\tau')\,&\beta
    \;\;\equiv_\tau\;\;\\
    \alpha\,(l,u)\xrightarrow{\tau{+}\tau'}(l,u{+}\tau{+}\tau')\,&\beta,\\
    \alpha\, s \xrightarrow{0} s\, \beta \;\;\equiv_\tau\;\; \alpha\,s\,\beta.    
    \end{aligned}
    \]
\end{enumerate}

We say two evolutions $\rho$ and $\rho'$ are \emph{observably equivalent}, denoted by $\rho \equiv_{\mathrm{obs}} \rho'$, if they belong to the reflexive and transitive closure of the union of these two relations, i.e., $(\equiv_\epsilon \cup \equiv_\tau)^*$.

The equivalence class of an evolution $\rho$ under this relation is denoted by $\llbracket \rho \rrbracket_{\equiv_{\mathrm{obs}}}$.
\end{definition}
\section{Notion of EBTO}\label{sec:notion of EBTO}
With the observation model established, we can now define what it means for a system to be opaque. We introduce \emph{EBTO}, which protects the secrecy of entire execution histories. Throughout this paper, the observation mapping $\mathcal{M}$ is induced by the specific configuration of observable sets $(L_{\mathrm{obs}}, C_{\mathrm{obs}}, \Sigma_{\mathrm{obs}})$. To focus on the core mathematical properties of opacity, we omit explicit mention of these sets in subsequent definitions and theorems unless a change in the observation model (e.g., setting $L_{\mathrm{obs}} = \varnothing$) is specifically discussed.

\begin{definition}[EBTO]\label{def:ebto}
Let $\mathcal{T}=(S,S_0,\Gamma,\delta)$ be a SG. A subset $\mathrm{Gen}^{\mathcal{T}}_s\subseteq \mathrm{Gen}^{\mathcal{T}}$ is a \emph{secret set} if it is closed under time additivity, continuity, and reflexivity equivalence:
\[
\forall \rho\in \mathrm{Gen}^{\mathcal{T}}_s,\quad \llbracket \rho \rrbracket_{\equiv_{\tau}} \subseteq \mathrm{Gen}^{\mathcal{T}}_s.
\]
The system $\mathcal{T}$ is said to be \emph{evolution-based timed opaque} w.r.t. the secret set $\mathrm{Gen}^{\mathcal{T}}_s$ and the observation mapping $\mathcal{M}$ iff $\forall \rho\in \mathrm{Gen}^{\mathcal{T}}_s,\; \exists \rho'\in \mathrm{Gen}^{\mathcal{T}}\setminus \mathrm{Gen}^{\mathcal{T}}_s
\text{ such that } \llbracket\mathcal{M}_\rho(\rho) \rrbracket_{\equiv_{\mathrm{obs}}}=\llbracket\mathcal{M}_\rho(\rho') \rrbracket_{\equiv_{\mathrm{obs}}}.$

A TA $\mathcal{A}$ is EBTO iff its SG is EBTO.
\end{definition}

The closure of $\mathrm{Gen}_s^{\mathcal T}$ under $\equiv_\tau$ is strictly motivated by the physical reality of observation, rather than mathematical convenience. In a physical system, an observer measuring time cannot distinguish between a single, continuous delay of $\tau + \tau'$ and the exact same delay conceptually fragmented into $\tau$ followed by $\tau'$. Similarly, a zero-delay transition ($\xrightarrow{0}$) takes no actual time and produces no physical progression. Consequently, a secret execution could be ``hidden" behind a non-secret execution that differs only on paper, e.g., by arbitrarily splitting a delay or inserting a physically meaningless zero-delay step. This would render the opacity condition vacuously true, allowing a system to be deemed secure by exploiting a mathematical loophole rather than reflecting true physical opacity.

To illustrate why this closure constraint is necessary, let us assume that the secret set is \emph{not} required to satisfy this closure property. Under this assumption, consider a finite SG as shown in Fig.~\ref{nomalizition_closed}, illustrating states reachable from $s_0$ via time-delay transitions. Suppose we define the single-step evolution $s_0 \xrightarrow{t_1+t_2} s_2$ as secret. At the same time, because we are ignoring the closure constraint, we are allowed to classify the fragmented step-by-step evolution $s_0 \xrightarrow{t_1} s_1 \xrightarrow{t_2} s_2$, as well as the evolution with a zero-delay $s_0 \xrightarrow{0} s_0 \xrightarrow{t_1+t_2} s_2$, as non-secret.
\begin{figure}[t]
  \centering
  \scalebox{0.7}{
  \begin{tikzpicture}[->, >=stealth, shorten >=1pt, node distance=2cm, on grid, auto]
    \node[state, initial, initial text=] (s0) {$s_0$};
    \node[state, right of=s0] (s0d) {$s_1$};
    \node[state, right of=s0d] (s0dd) {$s_2$};

    \path (s0) edge node[below] {$t_1$} (s0d);
    \path (s0d) edge node[below] {$t_2$} (s0dd);

    \path (s0) edge[bend left=40] node[above] {$t_1+t_2$} (s0dd);
    
    \path (s0) edge[loop below] node[below] {$0$} (s0);
    \path (s0d) edge[loop below] node[below] {$0$} (s0d);
    \path (s0dd) edge[loop below] node[below] {$0$} (s0dd);

  \end{tikzpicture}}
  \caption{Semantically indistinguishable time-delay evolutions}
  \label{nomalizition_closed}
\end{figure}
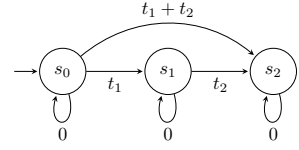
In this case, although such a setting is valid mathematically, it provides limited analytical value: regardless of how we choose the observable location set $L_{\mathrm{obs}}$, the observable action set $\Sigma_{\mathrm{obs}}$, or the observable clock set $C_{\mathrm{obs}}$, these three evolutions remain completely indistinguishable under the standard semantics of time delay.

\section{Translation Between Opacity Notions}\label{sec:translation}
In this section, we establish a hierarchy of timed opacity notions based on their expressiveness. We demonstrate that the proposed framework serves as a strictly more general semantic baseline by establishing constructive reductions from existing notions, such as LBTO~\cite{An2025} and ETO~\cite{Andr2023}, to EBTO, and we formally prove that the converse does not hold in general.
\subsection{Translation Between EBTO and LBTO}\label{sec:translation lbto}
In LBTO~\cite{An2025}, system behaviors are characterized by \emph{timed words}. To formalize the observer's observational capabilities in this setting, we recall the standard definition of a timed word and its projection. 

Let $\Sigma$ be a finite set of events. A \emph{timed word}~\cite{alur1994theory} over $\Sigma$ is a finite sequence of pairs
\[
\omega = (\sigma_1, t_1)(\sigma_2, t_2)\cdots(\sigma_n, t_n) \in (\Sigma \times \mathbb{R}_{\geq 0})^n, n\in\mathbb{N},
\]
where each $\sigma_i \in \Sigma$ is an event, and $t_i \in \mathbb{R}_{\geq 0}$ is a global timestamp such that $0 \leq t_1 \leq t_2 \leq \cdots \leq t_n$. A \emph{timed language} over $\Sigma$ is a set of timed words over $\Sigma$, i.e., $\mathcal{L} \subseteq (\Sigma \times \mathbb{R}_{\geq 0})^*$.

Given a subset $ \Sigma_{\mathrm{obs}} \subseteq \Sigma $, a \textit{projection} $ P_{\Sigma_{\mathrm{obs}}} $ on timed words w.r.t. $ \Sigma_{\mathrm{obs}} $ is a function
\[
P_{\Sigma_{\mathrm{obs}}}: (\Sigma \times \mathbb{R}_{\geq 0})^* \to (\Sigma_{\mathrm{obs}} \times \mathbb{R}_{\geq 0})^*
\]
such that
\[
\begin{aligned}
P_{\Sigma_{\mathrm{obs}}}(\epsilon) &= \epsilon\\
P_{\Sigma_{\mathrm{obs}}}(\omega \cdot (\sigma,t))& =
\begin{cases}
    P_{\Sigma_{\mathrm{obs}}}(\omega) \cdot (\sigma,t) & \text{if } \sigma \in \Sigma_{\mathrm{obs}} \\
    P_{\Sigma_{\mathrm{obs}}}(\omega) & \text{otherwise.}
\end{cases}
\end{aligned}
\]
Note that this definition structurally leads to the \emph{suffix blindness} mentioned earlier, as any time progress occurring after the last observable event is discarded in the projected timed word.

Next, we establish a formal reduction from LBTO to EBTO. The key difference between them is that LBTO lacks any state information: EBTO evaluates semantic evolutions containing locations and clocks, whereas LBTO relies solely on timed words. To bridge this structural gap, we define a mapping $\mathcal{Y}$ that projects detailed system executions onto observable timed sequences by accumulating continuous delays into global timestamps.
\begin{definition}[Mapping evolutions to timed words]\label{def:Y-mapping}
Let $\mathcal{T} = (S, s_0, \Gamma, \delta)$ be a SG with actions $\Gamma = \Sigma \cup \mathbb{R}_{\geq 0}$.
Consider a generated evolution $\rho \;=\; s_0 \xrightarrow{\gamma_0} s_1 \xrightarrow{\gamma_1} \cdots \xrightarrow{\gamma_{n-1}} s_n.$

We define the mapping $\mathcal{Y}$ that converts $\rho$ into a timed word by projecting the sequence onto  $(\Sigma \times \mathbb{R}_{\geq 0})^*$. Let $0 \le k_1 < k_2 < \dots < k_m < n$ be the ordered indices of the transitions that correspond to discrete events, i.e., $\gamma_{k_j} \in \Sigma$.
The \emph{timed word} of $\rho$ is defined as:
\[
\mathcal{Y}(\rho) \;=\; (\gamma_{k_1}, t_{k_1})\,(\gamma_{k_2}, t_{k_2})\,\cdots\,(\gamma_{k_m}, t_{k_m}),
\]
where each global timestamp $t_{k_j}$ is the sum of all preceding time delays:
\[
t_{k_j} \;=\; \sum_{h = 0}^{k_j - 1} \mathrm{dur}(\gamma_h),
\quad \text{with} \quad
\mathrm{dur}(\gamma) = 
\begin{cases} 
\gamma, & \text{if } \gamma \in \mathbb{R}_{\geq 0}, \\ 
0, & \text{if } \gamma \in \Sigma. 
\end{cases}
\]

The inverse mapping $\mathcal{Y}^{-1}$ assigns to a timed word $\omega$ the set of all evolutions generating it:
\[
\mathcal{Y}^{-1}(\omega) \;=\; \bigl\{\, \rho \in \mathrm{Gen}^{\mathcal{T}} \mid \mathcal{Y}(\rho) = \omega \,\bigr\}. \qedhere
\]
\end{definition}

We naturally extend these mappings to sets. For a set of evolutions $\mathrm{Evo} \subseteq \mathrm{Gen}^{\mathcal{T}}$, its generated language is $\mathcal{Y}(\mathrm{Evo}) = \{\mathcal{Y}(\rho) \mid \rho \in \mathrm{Evo}\}$. Conversely, for a given timed language $\mathcal{L}$, its preimage is the set of all evolutions that generate a word in $\mathcal{L}$:
\[
\mathcal{Y}^{-1}(\mathcal{L}) \;=\; \bigcup_{\omega \in \mathcal{L}} \mathcal{Y}^{-1}(\omega) \;=\; \{\, \rho \in \mathrm{Gen}^{\mathcal{T}} \mid \mathcal{Y}(\rho) \in \mathcal{L} \,\}.
\]
For instance, let $\rho$ be a valid evolution given by:
\[
\rho \;=\; s_0 \xrightarrow{1.5} s_1 \xrightarrow{a} s_2 \xrightarrow{0.7} s_3 \xrightarrow{0.3} s_4 \xrightarrow{b} s_5 \xrightarrow{0.2} s_6,
\]
the resulting timed word is: $\mathcal{Y}(\rho) \;=\; (a, 1.5)\,(b, 2.5)$.

This example illustrates how the mapping abstracts away intermediate states ($s_1, s_3, s_4$) and merges multiple delay steps into a single global timestamp. Notably, much like the projection $P_{\Sigma_{\mathrm{obs}}}$ discussed earlier, this mapping discards any time delay occurring after the final discrete event, e.g., the $0.2$ delay leading to $s_6$.

In the context of LBTO, the focus lies on the generated language of the system. To this end, we define the \emph{generated language} of a TA $\mathcal{A}$, denoted by $\mathcal{L}_{\mathrm{gen}}(\mathcal{A})$, by applying the mapping $\mathcal{Y}$ to all generated evolutions $\mathcal{L}_{\mathrm{gen}}(\mathcal{A}) \;=\; \mathcal{Y}(\mathrm{Gen}^{\mathcal{T}}).$ Then, we recall the definition of LBTO.

\begin{definition}[Language-based timed opacity, LBTO~\cite{An2025}]\label{def:lbto}
A TA $\mathcal{A}$ is language-based timed opaque w.r.t $\Sigma_{\mathrm{obs}}$ and the secret language $\mathcal{L}_s$ iff $\forall \omega \in \mathcal{L}_{\mathrm{gen}}(\mathcal{A}) \cap \mathcal{L}_s, \exists \omega' \in \mathcal{L}_{\mathrm{gen}}(\mathcal{A}) \setminus \mathcal{L}_s \text{ s.t. } P_{\Sigma_{\mathrm{obs}}}(\omega) = P_{\Sigma_{\mathrm{obs}}}(\omega')$.
\end{definition}

\begin{definition}[Conversion function, LBTO to EBTO]\label{CLE}
    Given a TA $\mathcal{A}=(L,l_0,C,\Sigma,E,I)$ and a secret language $\mathcal{L}_s$, the conversion function is $\mathcal{C}_{\mathrm{LE}}(\mathcal{A},\mathcal{L}_s)=(\mathcal{A}, \mathrm{Gen}^{\mathcal{T}}_s)$, where the set of secret evolutions is the exact inverse image of the secret language:
    \[
    \mathrm{Gen}_s^{\mathcal{T}} \;=\; \mathcal{Y}^{-1}(\mathcal{L}_s). \qedhere
    \]
\end{definition}

Furthermore, a valid reduction requires formal consistency between the observation models of both domains. Since LBTO restricts the observer to discrete events and timestamps, the EBTO observer must be restricted from accessing location and clock data (e.g., $L_{\mathrm{obs}} = \varnothing$ and $C_{\mathrm{obs}} = \varnothing$). Under this specific restriction, observational equivalence in the SG directly implies projection equality in the language domain. That is, for any $\rho, \rho' \in \mathrm{Gen}^{\mathcal{T}}$:
\[
\begin{aligned}
   & \llbracket\mathcal{M}(\rho)\rrbracket_{\equiv_{\mathrm{obs}}} = \llbracket\mathcal{M}(\rho')\rrbracket_{\equiv_{\mathrm{obs}}} \\&\implies P_{\Sigma_{\mathrm{obs}}}(\mathcal{Y}(\rho)) = P_{\Sigma_{\mathrm{obs}}}(\mathcal{Y}(\rho')).
\end{aligned}
\](The formal description and proof are provided in \hypertarget{main:lem:obs_equiv}{\hyperref[app:sec:lem:obs_equiv]{Appendix~\ref*{app:sec:lem:obs_equiv}}}.)

These structural and observational equivalences directly yield the main reduction theorem.

\begin{theorem}[Sufficiency of EBTO for LBTO]\label{thm:lbto_to_ebto}
Let $\mathcal{A}$ be a TA with SG $\mathcal{T}=\mathcal{S}(\mathcal{A})$, let $\Sigma_{\mathrm{obs}} \subseteq \Sigma$ be a given set of observable events where locations and clocks are unobservable (i.e., $L_{\mathrm{obs}} = \varnothing$ and $C_{\mathrm{obs}} = \varnothing$), and let $\mathcal{L}_s$ be a secret language. Let $(\mathcal{A}, \mathrm{Gen}_s^{\mathcal{T}}) = \mathcal{C}_{\mathrm{LE}}(\mathcal{A}, \mathcal{L}_s)$.

The TA $\mathcal{A}$ is LBTO w.r.t. $\Sigma_{\mathrm{obs}}$ and $\mathcal{L}_s$ if it is EBTO w.r.t. the $\mathcal{M}$ and $\mathrm{Gen}_s^{\mathcal{T}}$.
\end{theorem}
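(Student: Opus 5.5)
We argue directly from the definitions, taking the observational-equivalence lemma stated just before the theorem (proved in Appendix~\ref{app:sec:lem:obs_equiv}) as given. Assume $\mathcal{A}$ is EBTO w.r.t.\ $\mathcal{M}$ and $\mathrm{Gen}_s^{\mathcal{T}}=\mathcal{Y}^{-1}(\mathcal{L}_s)$, with $L_{\mathrm{obs}}=C_{\mathrm{obs}}=\varnothing$. Fix an arbitrary $\omega\in\mathcal{L}_{\mathrm{gen}}(\mathcal{A})\cap\mathcal{L}_s$. We must produce $\omega'\in\mathcal{L}_{\mathrm{gen}}(\mathcal{A})\setminus\mathcal{L}_s$ with $P_{\Sigma_{\mathrm{obs}}}(\omega)=P_{\Sigma_{\mathrm{obs}}}(\omega')$.

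\emph{Step 1 (lift $\omega$ to a secret evolution).} Since $\mathcal{L}_{\mathrm{gen}}(\mathcal{A})=\mathcal{Y}(\mathrm{Gen}^{\mathcal{T}})$, there is $\rho\in\mathrm{Gen}^{\mathcal{T}}$ with $\mathcal{Y}(\rho)=\omega$. Because $\omega\in\mathcal{L}_s$, we get $\rho\in\mathcal{Y}^{-1}(\mathcal{L}_s)=\mathrm{Gen}_s^{\mathcal{T}}$.

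Before invoking Definition~\ref{def:ebto}, we check that $\mathrm{Gen}_s^{\mathcal{T}}$ is a legitimate secret set, i.e.\ closed under $\equiv_\tau$. It suffices to show that $\mathcal{Y}$ is invariant under the two generating rules of $\equiv_\tau$:
\begin{itemize}
\item Splitting or merging $\tau,\tau'$ into $\tau+\tau'$ does not change the sum of delays preceding any event, and it does not add or remove events.
\item Inserting or deleting a $0$-delay leaves every timestamp unchanged.
\end{itemize}
Timestamps are computed via $\mathrm{dur}$ and only events are recorded, so $\mathcal{Y}$ is invariant. Generatedness is also preserved, because time additivity and reflexivity (Property~\ref{property:time additivity}) guarantee that the rewritten sequences are still evolutions of $\mathcal{T}$ from the same initial state. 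Note that the definition of $\equiv_\tau$ is stated over arbitrary sequences, so we restrict to those lying in $\mathrm{Gen}^{\mathcal{T}}$, which is what closure requires.

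\emph{Step 2 (apply EBTO).} By EBTO there is $\rho'\in\mathrm{Gen}^{\mathcal{T}}\setminus\mathrm{Gen}_s^{\mathcal{T}}$ with $\llbracket\mathcal{M}(\rho)\rrbracket_{\equiv_{\mathrm{obs}}}=\llbracket\mathcal{M}(\rho')\rrbracket_{\equiv_{\mathrm{obs}}}$. Set $\omega'=\mathcal{Y}(\rho')$. Then $\omega'\in\mathcal{L}_{\mathrm{gen}}(\mathcal{A})$ by definition. Moreover $\omega'\notin\mathcal{L}_s$, since otherwise $\rho'\in\mathcal{Y}^{-1}(\mathcal{L}_s)=\mathrm{Gen}_s^{\mathcal{T}}$. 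This is exactly where the exact-preimage choice in Definition~\ref{CLE} is used.

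\emph{Step 3 (transfer the observation).} Under $L_{\mathrm{obs}}=C_{\mathrm{obs}}=\varnothing$, the appendix lemma turns equality of $\equiv_{\mathrm{obs}}$-classes into $P_{\Sigma_{\mathrm{obs}}}(\mathcal{Y}(\rho))=P_{\Sigma_{\mathrm{obs}}}(\mathcal{Y}(\rho'))$, that is, $P_{\Sigma_{\mathrm{obs}}}(\omega)=P_{\Sigma_{\mathrm{obs}}}(\omega')$. Since $\omega$ was arbitrary, $\mathcal{A}$ is LBTO.

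The main obstacle is the well-definedness check in Step 1, namely that $\mathcal{Y}^{-1}(\mathcal{L}_s)$ is $\equiv_\tau$-closed inside $\mathrm{Gen}^{\mathcal{T}}$. This requires an induction over the generating rules of $\equiv_\tau$ with arbitrary contexts $\alpha,\beta$, verifying for each rule that the global timestamps of all events in $\beta$ are unchanged. The real difficulty, the observational transfer, has been delegated to the appendix lemma.
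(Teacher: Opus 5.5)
Your proposal is correct and follows the paper's proof step for step. You lift $\omega$ to some $\rho\in\mathcal{Y}^{-1}(\mathcal{L}_s)$, take the EBTO witness $\rho'$, use the exact-preimage choice to get $\mathcal{Y}(\rho')\notin\mathcal{L}_s$, and transfer the observation via Lemma~\ref{lem:obs_equiv}; the $\equiv_\tau$-closure check you add, which the paper omits, is correct and makes the hypothesis well-posed, but it is not used in the deduction, so it is not really the ``main obstacle.''
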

\noindent\textit{Proof.} The formal proof is provided in \hypertarget{main:thm:lbto_to_ebto}{\hyperref[app:thm:lbto_to_ebto]{Appendix~\ref*{app:sec:thm:lbto_to_ebto}}}.

Theorem~\ref{thm:lbto_to_ebto} establishes a strict one-way implication. The reverse implication does not hold due to a fundamental structural limitation in language-based formulations, which we term \emph{suffix blindness}. Specifically, the LBTO observation mapping truncates the continuous time progress at the occurrence of the last observable event, discarding any trailing time delays or subsequent unobservable state evaluations. A TA might be deemed secure under LBTO simply because this language projection masks terminal timing discrepancies. In contrast, the observation equivalence $\equiv_{\mathrm{obs}}$ in EBTO strictly preserves the continuous progress of time, identifying evolutions that diverge only in their trailing delays.

Consider the TA shown in Fig.~\ref{fig:suffix blindness example}, where event $a$ is observable and event $b$ is unobservable. The TA generates two timed words: $\omega_1 = (a, 1)$ and $\omega_2 = (a, 1)(b, 100)$. Let the secret language be $\mathcal{L}_s = \{\omega_1\}$ and non-secret language be $\{\omega_2\}$. Under the LBTO projection, the unobservable event $b$ and its timestamp are discarded. Thus, $P_{\Sigma_{\mathrm{obs}}}(\omega_1) = P_{\Sigma_{\mathrm{obs}}}(\omega_2) = (a, 1)$, the TA is LBTO. 

However, under EBTO, the set of secret evolutions is defined as $\mathrm{Gen}_s^{\mathcal{T}} = \mathcal{Y}^{-1}(\{\omega_1\})$, which contains all evolutions with a total time delay $t \in [1, 100)$. The set of non-secret evolutions $\mathcal{Y}^{-1}(\{\omega_2\})$ contains evolutions with $t \ge 100$. Since no secret evolution is observationally equivalent to any non-secret evolution, the observer can identify the secret. Therefore, the TA is not EBTO. This example shows that LBTO fails to detect the secret because its projection discards trailing delays. In contrast, EBTO preserves these delays and identifies the system as not opaque.

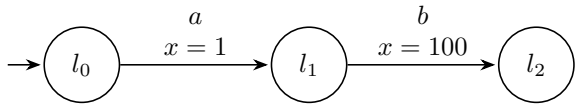
\begin{figure}[b]
    \centering
    \begin{tikzpicture}[
        ->,                      
        >=Stealth,               
        shorten >=1pt,           
        auto,                    
        node distance=2cm,       
        semithick,               
        state/.style={circle, draw, minimum size=1cm} 
    ]
        \node[state, initial, initial text={}] (l0) {$l_0$};
        \node[state] (l1) [right=of l0] {$l_1$};
        \node[state] (l2) [right=of l1] {$l_2$};

        \path (l0) edge node[align=center] {$a$\\ $x=1$} (l1)
              (l1) edge node[align=center] {$b$\\ $x=100$} (l2);

    \end{tikzpicture}
    \caption{A TA example illustrating suffix blindness.}
    \label{fig:suffix blindness example}
\end{figure}

To formally demonstrate this asymmetry, the following proposition proves that EBTO is strictly more expressive, as no LBTO setting can capture secrets embedded exclusively within terminal delays.

\begin{proposition}[Strict expressiveness of EBTO over LBTO]\label{prop:ebto_to_lbto}
There exists a TA $\mathcal{A}$ and a set of secret evolutions $\mathrm{Gen}_s^{\mathcal{T}}$ such that no LBTO secret language $\mathcal{L}_s$ over any observable event set can exactly capture the same secret behaviors. Specifically, $\mathrm{Gen}_s^{\mathcal{T}} \neq \mathcal{Y}^{-1}(\mathcal{L}_s)$ for all possible $\mathcal{L}_s$.
\end{proposition}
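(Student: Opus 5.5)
The plan is to exhibit a concrete TA and a secret set that distinguishes evolutions with the same timed word, and then argue that no preimage $\mathcal{Y}^{-1}(\mathcal{L}_s)$ can separate them. The key observation is structural: $\mathcal{Y}^{-1}(\mathcal{L}_s)$ is always a union of fibres $\mathcal{Y}^{-1}(\omega)$. So if $\rho,\rho'\in\mathrm{Gen}^{\mathcal{T}}$ satisfy $\mathcal{Y}(\rho)=\mathcal{Y}(\rho')$, then every set of the form $\mathcal{Y}^{-1}(\mathcal{L}_s)$ either contains both or neither. It therefore suffices to find a secret set $\mathrm{Gen}_s^{\mathcal{T}}$ that is closed under $\equiv_\tau$, as Definition~\ref{def:ebto} requires, and that contains exactly one of two evolutions sharing a timed word. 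The choice of $\Sigma_{\mathrm{obs}}$ plays no role, because $\mathcal{Y}$ and $\mathcal{Y}^{-1}$ in Definition~\ref{CLE} do not depend on it. This is why the claim holds ``over any observable event set''.

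For the construction, I would take the simplest possible deterministic TA: a single location $l_0$ with one clock $x$, invariant $\mathrm{true}$, and no edges. (Alternatively, one can reuse the TA of Fig.~\ref{fig:suffix blindness example}, restricted to the trailing-delay part.) Every generated evolution then consists only of delays, so $\mathcal{Y}(\rho)=\epsilon$ for all $\rho\in\mathrm{Gen}^{\mathcal{T}}$. Define the total duration $D(\rho)=\sum_h \mathrm{dur}(\gamma_h)$, and set
\[
\mathrm{Gen}_s^{\mathcal{T}}=\{\rho\in\mathrm{Gen}^{\mathcal{T}} \mid D(\rho)\ge 1\}.
\]
This secret is embedded exclusively in a terminal delay.

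The proof then proceeds in three steps. First, I verify that $\mathrm{Gen}_s^{\mathcal{T}}$ is closed under $\equiv_\tau$. Both generating rules of $\equiv_\tau$ (merging $\tau,\tau'$ into $\tau+\tau'$, and inserting or removing a $0$-delay) preserve $D$. Closure under the reflexive-transitive-symmetric closure then follows by induction on the length of the derivation. I should also check that the equivalence class contains only generated evolutions of this SG, or at least that the closure condition is read relative to $\mathrm{Gen}^{\mathcal{T}}$. Second, I exhibit $\rho_1=(l_0,\mathbf 0)\xrightarrow{1}(l_0,\mathbf 0+1)\in\mathrm{Gen}_s^{\mathcal{T}}$ and the zero-length evolution $\rho_0=((l_0,\mathbf 0))\notin\mathrm{Gen}_s^{\mathcal{T}}$, noting that both are generated and that $\mathcal{Y}(\rho_1)=\mathcal{Y}(\rho_0)=\epsilon$. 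Third, I take an arbitrary $\mathcal{L}_s$ and split into two cases. If $\epsilon\in\mathcal{L}_s$, then $\rho_0\in\mathcal{Y}^{-1}(\mathcal{L}_s)\setminus\mathrm{Gen}_s^{\mathcal{T}}$. Otherwise, $\rho_1\in\mathrm{Gen}_s^{\mathcal{T}}\setminus\mathcal{Y}^{-1}(\mathcal{L}_s)$. In both cases the two sets differ.

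The main obstacle is the first step: checking rigorously that the secret set respects the $\equiv_\tau$-closure requirement. Definition~\ref{def:obs-equivalence} is stated with contexts over both $S$ and $S^o$, so I must argue carefully that equivalent generated evolutions have equal total duration. I would handle this with a one-line invariant argument: $D$ is constant on each generating pair, hence constant on the whole equivalence class.
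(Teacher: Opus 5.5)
Your proposal is correct and takes essentially the same route as the paper. Both proofs exhibit two generated evolutions with the same timed word, exactly one of which lies in an $\equiv_\tau$-closed secret set, and then split on whether that common word belongs to $\mathcal{L}_s$. The only difference is the instance: the paper uses a single location with an $a$-self-loop resetting $x$, with the secret defined by a threshold $\chi$ on the delay after the last $a$, and gets $\equiv_\tau$-closure for free by taking a union of $\equiv_\tau$-classes. Your edge-free TA with a total-duration threshold is a simpler variant, and you verify its closure directly through the invariance of $D$.
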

\noindent\textit{Proof.} The formal proof is provided in \hypertarget{main:prop:ebto_to_lbto}{\hyperref[app:prop:ebto_to_lbto]{Appendix~\ref*{app:sec:prop:ebto_to_lbto}}}.

\subsection{Translation Between EBTO and ETO}
Unlike LBTO, which relies on event sequences, ETO~\cite{Andr2023} is formulated over \emph{runs} where consecutive time delays are merged (by time additivity) and evaluated based on their accumulated duration.

\begin{definition}[ETO~\cite{Andr2023}]\label{def:ETO_original}
Let $\mathcal{A}=(L, l_0, C, \Sigma, E, I)$ be a TA. Let $l_{\mathrm{priv}} \in L$ be a private location and $l_f \in L$ be a final location.
A \emph{run} $\mathsf {Run}$ of $\mathcal{A}$ is an alternating sequence of concrete states and pairs of delays and events starting from the initial state $(l_0, \mathbf{0})$:
\[
\begin{aligned}
    \mathsf {Run} = (l_0, \mathbf{0}), &(d_0, e_0), (l_1, u_1), \cdots,(d_{m-1}, e_{m-1}), (l_m, u_m),\\&\cdots, (d_{n-1}, e_{n-1}), (l_n, u_n),\quad m,n\in\mathbb{N},
\end{aligned}
\]
where $d_i \in \mathbb{R}_{\geq 0}$ represents a delay, and $e_i \in \Sigma$ represents a discrete event.

The \emph{duration} of a $\mathsf {Run}$ is defined as the sum of delays:
\[
\mathsf {Duration}(\mathsf {Run}) = \sum_{0 \leq i \leq n-1} d_i.
\]

We define two sets of runs based on the exact visitation of specific locations, $l_{\mathrm{priv}}$ and $l_f$:

\begin{enumerate}
    \item Private runs $\mathsf{Visit^{\mathrm{priv}}}(\mathcal{A})$: The set of runs such that
    \[
    l_m = l_{\mathrm{priv}} \land l_n = l_f \land (\forall 0 \leq i \leq n-1, l_i \neq l_f).
    \]
    This formally captures runs that visit $l_{\mathrm{priv}}$ on the way to the first occurrence of $l_f$. We denote by $\mathsf{DVisit}^{\mathrm{priv}}(\mathcal{A})$ the set of all durations of these runs.

    \item Public runs $\overline{\mathsf{{Visit}}}^{\mathrm{priv}}(\mathcal{A})$: The set of runs such that 
    \[\begin{aligned}
    l_n = l_f \land (\forall 0 \leq i \leq n-1, l_i \neq l_f) \land \\(\forall 0 \leq j \leq n, l_j \neq l_{\mathrm{priv}}).
    \end{aligned}
    \]
    This captures runs that reach $l_f$ without ever visiting $l_{\mathrm{priv}}$. We denote by $\mathsf{D\overline{Visit}^{\mathrm{priv}}}(\mathcal{A})$ the set of all durations of these runs.
\end{enumerate}

The system $\mathcal{A}$ is said to be \emph{execution-time opaque}\footnote{We focus on the notion of \emph{weak ET-opacity} as defined in \cite{Andr2023}.} (specifically, Weak ETO) if the set of private durations is included in the set of public durations:
\[
\mathsf{DVisit}^{\mathrm{priv}}(\mathcal{A}) \subseteq \mathsf{D\overline{Visit}^{\mathrm{priv}}}(\mathcal{A}).\qedhere
\]
\end{definition}

To analyze ETO within our proposed framework, we define a normalization mapping. 

\begin{definition}[Normalization mapping]\label{def:normalization_mapping}
We define the normalization mapping $\Psi: \mathrm{Gen}^{\mathcal{T}} \to Runs(\mathcal{A})$. 
For any evolution $\rho \in \mathrm{Gen}^{\mathcal{T}}$, let $0 \le k_1 < k_2 < \dots < k_m < n$ be the indices of all discrete events in $\rho$ (i.e., $\gamma_{k_j} \in \Sigma$).
The normalized run is the sequence of $m$ steps:
\[\begin{aligned}
\Psi(\rho) \;=\; &s_0,\, (d_1, \gamma_{k_1}),\, s_{k_1+1},\, (d_2, \gamma_{k_2}),\, s_{k_2+1},\\&\, \dots,\, (d_m, \gamma_{k_m}),\, s_{k_m+1},
\end{aligned}
\]
where for each $j \in \{1, \dots, m\}$, the delay $d_j$ is the sum of all delays between the previous discrete event $\gamma_{k_{j-1}}$ (or the start of the evolution if $j=1$) and the current event $\gamma_{k_j}$.

The reversed function is defined as:
\[
\Psi^{-1}(\mathsf {Run} ) \;=\; \{\, \rho \in \mathrm{Gen}^{\mathcal{T}} \mid \Psi(\rho) = \mathsf {Run}  \,\}. \qedhere
\]
\end{definition}
Note that if $\rho$ ends with a time delay, i.e., there are time delays after the last discrete event $\gamma_{k_m}$, this trailing duration is  not included in $\Psi(\rho)$.

Consider a SG $\mathcal{T}$ and a generated evolution $\rho$ involving states $s_0, \dots, s_6$:
\[
\rho \;=\; s_0 \xrightarrow{0.5} s_1 \xrightarrow{1.5} s_2 \xrightarrow{a} s_3 \xrightarrow{3.0} s_4 \xrightarrow{b} s_5 \xrightarrow{4.5} s_6,
\]
the resulting canonical run is:
\[
\Psi(\rho) \;=\; s_0, (2.0, a), s_3, (3.0, b), s_5.
\]
This run has a total duration of $5.0$. Note that the semantic duration of $\rho$ was $9.5$. This discrepancy reveals a structural divergence: while a general evolution in a SG can terminate at any arbitrary time point, an ETO run only accounts for the time accumulated up to the final discrete event. In ETO, the duration of a run accounts for the time accumulated up to the first arrival at the final location $l_f$. To bridge this with our framework, we define the set of evolutions with this termination condition. Let $\mathrm{Gen}_{l_f}^{\mathcal{T}} \subseteq \mathrm{Gen}^{\mathcal{T}}$ denote the set of all generated evolutions that terminate exactly upon entering $l_f$ for the first time. Because location changes in TA are strictly driven by discrete events, every evolution in $\mathrm{Gen}_{l_f}^{\mathcal{T}}$ naturally ends with a discrete transition. Consequently, there are no unrecorded trailing delays, and the total duration in the SG corresponds exactly to the run duration in the ETO framework: $\forall \rho \in \mathrm{Gen}_{l_f}^{\mathcal{T}}, \mathsf{Duration}(\rho) = \mathsf{Duration}(\Psi(\rho))$.

\begin{definition}[Conversion function, ETO to EBTO]\label{def:C_ExE}
    Given a TA $\mathcal{A}$ with a $l_{\mathrm{priv}}$ and $l_f$, we define the conversion function $\mathcal{C}_{\mathrm{ExE}}(\mathcal{A},l_{\mathrm{priv}},l_f) = (\mathcal{A}, \mathrm{Gen}_s^{\mathcal{T}})$. The set of secret evolutions is defined as the subset of evolutions ending at $l_f$ that map to a private run:
    \[
    \mathrm{Gen}_s^{\mathcal{T}} = \{ \rho \in \mathrm{Gen}_{l_f}^{\mathcal{T}} \mid \Psi(\rho) \in \mathsf{Visit^{\mathrm{priv}}}(\mathcal{A})\}.\qedhere
    \]
\end{definition}

A valid reduction requires consistency between the information available to the observer in both domains. Setting $L_{\mathrm{obs}} = \{l_f\}$, $\Sigma_{\mathrm{obs}} = \varnothing$, and $C_{\mathrm{obs}} = \varnothing$ ensures the EBTO observer is blind to all intermediate states and discrete events, matching the duration-centric focus of ETO. Under this strict observation model, the observer only perceives the continuous flow of time culminating in the arrival at the final location $l_f$. Consequently, for any two generated evolutions $\rho, \rho' \in \mathrm{Gen}_{l_f}^{\mathcal{T}}$, their observational equivalence holds if and only if their total durations are equal:
\[\begin{aligned}
\llbracket \mathcal{M}(\rho) \rrbracket_{\equiv_{\mathrm{obs}}} = \llbracket \mathcal{M}(\rho') \rrbracket_{\equiv_{\mathrm{obs}}} \iff \\\mathsf{Duration}(\rho) = \mathsf{Duration}(\rho').
\end{aligned}
\]\hypertarget{main:lem:duration_strict}{(The formal description and proof are provided in \hyperref[lem:duration_strict]{Appendix~\ref*{app:sec:lem:duration_strict}}).}

\begin{theorem}\label{thm:eto_to_ebto}
Let $\mathcal{A}$ be a TA with private location $l_{\mathrm{priv}}$ and final location $l_f$. Let $L_{\mathrm{obs}}=\{l_f\}$, $\Sigma_{\mathrm{obs}} = \varnothing$, and $C_{\mathrm{obs}} = \varnothing$. Let $(\mathcal{A}, \mathrm{Gen}_s^{\mathcal{T}}) = \mathcal{C}_{\mathrm{ExE}}(\mathcal{A},l_{\mathrm{priv}},l_f)$ be the converted system. 

The TA $\mathcal{A}$ is ETO w.r.t. $l_{\mathrm{priv}},l_f$ iff it is EBTO w.r.t. the observation mapping $\mathcal{M}$ and secret evolutions $\mathrm{Gen}_s^{\mathcal{T}}$.
\end{theorem}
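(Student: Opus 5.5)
We plan to prove both directions by moving between runs and evolutions through the normalization mapping $\Psi$ of Definition~\ref{def:normalization_mapping}, and to compare observations using the duration characterization stated just before the theorem (Appendix~\ref{app:sec:lem:duration_strict}). That result says that for $\rho,\rho'\in\mathrm{Gen}_{l_f}^{\mathcal{T}}$, under $L_{\mathrm{obs}}=\{l_f\}$ and $\Sigma_{\mathrm{obs}}=C_{\mathrm{obs}}=\varnothing$, we have $\llbracket\mathcal{M}(\rho)\rrbracket_{\equiv_{\mathrm{obs}}}=\llbracket\mathcal{M}(\rho')\rrbracket_{\equiv_{\mathrm{obs}}}$ iff $\mathsf{Duration}(\rho)=\mathsf{Duration}(\rho')$.

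\textbf{Lifting lemma.} We first record that every run $\mathsf{Run}$ of $\mathcal{A}$ reaching $l_f$ for the first time at its end is the image $\Psi(\rho)$ of some $\rho\in\mathrm{Gen}_{l_f}^{\mathcal{T}}$. The evolution $\rho$ is obtained by replacing each pair $(d_i,e_i)$ by a time transition of length $d_i$ followed by the event transition $e_i$. Validity in the SG follows from the run semantics of~\cite{Andr2023}, which enforce invariants along delays and guards/resets on events, matching Definition~\ref{SG}. Moreover $\mathsf{Duration}(\rho)=\mathsf{Duration}(\Psi(\rho))$, as noted before Definition~\ref{def:C_ExE}. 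Finally, for $\rho\in\mathrm{Gen}_{l_f}^{\mathcal{T}}$, $\Psi(\rho)$ ends at the first visit to $l_f$, so it is either in $\mathsf{Visit}^{\mathrm{priv}}(\mathcal{A})$ or in $\overline{\mathsf{Visit}}^{\mathrm{priv}}(\mathcal{A})$ (assuming $l_{\mathrm{priv}}\neq l_f$). Hence $\rho\in\mathrm{Gen}_{l_f}^{\mathcal{T}}\setminus\mathrm{Gen}_s^{\mathcal{T}}$ iff $\Psi(\rho)$ is public.

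\textbf{ETO $\Rightarrow$ EBTO.} Take $\rho\in\mathrm{Gen}_s^{\mathcal{T}}$. Then $\Psi(\rho)$ is private with duration $d=\mathsf{Duration}(\rho)$. Weak ETO yields a public run with duration $d$, and the lifting lemma gives $\rho'\in\mathrm{Gen}_{l_f}^{\mathcal{T}}$ with that run as its image. This $\rho'$ is non-secret by the dichotomy above, and the duration characterization gives equal observation classes.

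\textbf{EBTO $\Rightarrow$ ETO.} Take $d\in\mathsf{DVisit}^{\mathrm{priv}}(\mathcal{A})$, witnessed by a private run, and lift it to $\rho\in\mathrm{Gen}_s^{\mathcal{T}}$. EBTO gives $\rho'\in\mathrm{Gen}^{\mathcal{T}}\setminus\mathrm{Gen}_s^{\mathcal{T}}$ with matching observation class. The main obstacle is that $\rho'$ need not lie in $\mathrm{Gen}_{l_f}^{\mathcal{T}}$, so the duration characterization does not apply directly.

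To handle this, we will analyze the generators of $\equiv_{\mathrm{obs}}$ on observed evolutions.
\begin{itemize}
\item The observed states of $\mathcal{M}(\rho)$ are $(l_\epsilon,u_\epsilon)$ everywhere except the last state, which is $(l_f,u_\epsilon)$.
\item Neither $\equiv_\epsilon$ nor $\equiv_\tau$ can create or delete an observed state with a different location mask from its neighbours.
\item Therefore $\mathcal{M}(\rho')$ has the same alternation pattern: it is $l_\epsilon$ up to some point, then $l_f$, with only $\sigma_\epsilon$-steps and zero delays possibly following.
\item A positive trailing delay would change the normalized total observed time, which is an invariant of $\equiv_{\mathrm{obs}}$. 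We will argue this by showing the sum of delays is preserved by every generator.
\end{itemize}

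From this, $\rho'$ visits $l_f$ and its total duration equals that accumulated up to its first entry into $l_f$. Let $\rho''$ be the prefix of $\rho'$ ending at the first entry into $l_f$. Then $\rho''\in\mathrm{Gen}_{l_f}^{\mathcal{T}}$ and $\mathsf{Duration}(\rho'')=\mathsf{Duration}(\rho')=\mathsf{Duration}(\rho)=d$.

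It remains to show that $\Psi(\rho'')$ is public. If it were private, then $\rho''$ would visit $l_{\mathrm{priv}}$. We must then argue that $\rho'$ itself would be private, or else replace $\rho'$ by $\rho''$ from the outset. We expect this step to need care: whether $\rho''$ is secret depends only on $\rho''$, so we will instead show directly that $\rho''$ is a valid non-secret witness whenever $\rho'$ is. We would first try to establish that $\mathrm{Gen}_s^{\mathcal{T}}$-membership is determined by the prefix up to first arrival at $l_f$. If that fails, we would check whether $\rho'\notin\mathrm{Gen}_{l_f}^{\mathcal{T}}$ can actually arise with an equivalent observation, which would force a refinement of the choice of witness. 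Once $\Psi(\rho'')$ is public, its duration is $d$, so $d\in\mathsf{D\overline{Visit}^{\mathrm{priv}}}(\mathcal{A})$.
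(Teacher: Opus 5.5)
\textbf{Verdict: the $(\Leftarrow)$ direction is not completed, and as stated it cannot be.}

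Your $(\Rightarrow)$ direction matches the paper's argument: lift the public run to an evolution in $\mathrm{Gen}_{l_f}^{\mathcal{T}}$, note that it is non-secret, and apply the duration lemma. That part is fine.

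The gap in $(\Leftarrow)$ is exactly the step you left open, and neither of your fallbacks can close it. Let $\rho\in\mathrm{Gen}_s^{\mathcal{T}}$ end in $s_n=(l_f,u)$, and set $\rho'=\rho\xrightarrow{0}s_n$.
\begin{itemize}
\item $\rho'$ is a generated evolution.
\item It ends with a time transition, so $\rho'\notin\mathrm{Gen}_{l_f}^{\mathcal{T}}\supseteq\mathrm{Gen}_s^{\mathcal{T}}$.
\item $\mathcal{M}(\rho')\equiv_\tau\mathcal{M}(\rho)$ by the zero-delay rule with empty right context.
\item Its prefix up to the first arrival at $l_f$ is $\rho$ itself, which is private.
\end{itemize}
This defeats each fallback. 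Membership in $\mathrm{Gen}_s^{\mathcal{T}}$ is not determined by that prefix, because the conversion forces $\mathrm{Gen}_s^{\mathcal{T}}\subseteq\mathrm{Gen}_{l_f}^{\mathcal{T}}$. Witnesses outside $\mathrm{Gen}_{l_f}^{\mathcal{T}}$ do arise. You also cannot refine the choice of witness: EBTO is your hypothesis and only guarantees that \emph{some} witness exists, and this one always does.

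Hence the EBTO condition holds for every converted instance, and EBTO $\Rightarrow$ ETO is false as stated. Take the TA with edges $(l_0,a,\mathrm{true},\varnothing,l_{\mathrm{priv}})$ and $(l_{\mathrm{priv}},b,\mathrm{true},\varnothing,l_f)$ and all invariants $\mathrm{true}$. Then $\mathsf{DVisit}^{\mathrm{priv}}(\mathcal{A})=\mathbb{R}_{\ge0}$ and there are no public runs, so $\mathcal{A}$ is not ETO, yet it is EBTO. Any edge from $l_f$ to $l_f$ produces the same effect through $\equiv_\epsilon$. Moreover, $\rho'\in\llbracket\rho\rrbracket_{\equiv_\tau}$, so the converted set is not even a legal secret set under the $\equiv_\tau$-closure requirement of Definition~\ref{def:ebto}.

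The paper's proof has the same flaw. It asserts that $\mathcal{M}(\rho')$ records the entry into $l_f$ \emph{with no subsequent delays} and concludes $\rho'\in\mathrm{Gen}_{l_f}^{\mathcal{T}}$. That overlooks exactly the zero-delay and $\sigma_\epsilon$ suffixes your generator analysis detected.

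A true statement needs a different conversion, not a cleverer proof. For example, take as secret set all $\rho''\beta$ such that:
\begin{itemize}
\item $\rho''\in\mathrm{Gen}_{l_f}^{\mathcal{T}}$ and $\Psi(\rho'')\in\mathsf{Visit}^{\mathrm{priv}}(\mathcal{A})$;
\item $\beta$ consists only of zero delays and events from $l_f$ to $l_f$.
\end{itemize}
This set is $\equiv_\tau$-closed. Your $(\Rightarrow)$ argument goes through unchanged, since $\llbracket\mathcal{M}(\rho''\beta)\rrbracket_{\equiv_{\mathrm{obs}}}=\llbracket\mathcal{M}(\rho'')\rrbracket_{\equiv_{\mathrm{obs}}}$.

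For $(\Leftarrow)$, sharpen your invariant slightly: track the sequence of observed location blocks together with the delay accumulated in each block, not just the total delay. Every generator of $\equiv_{\mathrm{obs}}$ preserves this. It forces any witness to be a first-arrival prefix of duration $d$ followed by such a $\beta$. Being non-secret, that prefix is not private, hence public. This is your first fallback, but it has to be built into $\mathcal{C}_{\mathrm{ExE}}$ rather than derived from it.
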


\noindent\textit{Proof.} The formal proof is provided in \hypertarget{main:thm:eto_to_ebto}{\hyperref[thm:eto_to_ebto]{Appendix~\ref*{app:sec:thm:eto_to_ebto}}}.

Theorem~\ref*{thm:eto_to_ebto} demonstrates that ETO is structurally equivalent to EBTO under a minimalist observation model where all discrete components, locations, clocks, and events, are unobservable (i.e., $\Sigma_{\mathrm{obs}} = L_{\mathrm{obs}} = C_{\mathrm{obs}} = \varnothing$).

\begin{proposition}[Strict expressiveness of EBTO over ETO]\label{prop:ebto_to_eto}
There exists an EBTO instance $(\mathcal{A}, \mathrm{Gen}_s^{\mathcal{T}})$ such that for any TA $\mathcal{A}'$ and $l_{\mathrm{priv}}$ and $l_f$, the ETO property of $(\mathcal{A}', l_{\mathrm{priv}}$ and $l_f)$ is not equivalent to the EBTO property of $(\mathcal{A}, \mathrm{Gen}_s^{\mathcal{T}})$.
\end{proposition}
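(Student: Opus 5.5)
We read the statement in the same spirit as Proposition~\ref{prop:ebto_to_lbto}. An ETO instance $(\mathcal{A}',l_{\mathrm{priv}},l_f)$ is ``equivalent'' to an EBTO instance $(\mathcal{A},\mathrm{Gen}_s^{\mathcal{T}})$ if the ETO instance, passed through $\mathcal{C}_{\mathrm{ExE}}$, yields the same system and the same secret evolutions. That is, $\mathcal{A}'=\mathcal{A}$ and
\[
\mathrm{Gen}_s^{\mathcal{T}}=\{\rho\in\mathrm{Gen}^{\mathcal{T}}_{l_f}\mid \Psi(\rho)\in\mathsf{Visit}^{\mathrm{priv}}(\mathcal{A})\},
\]
so that Theorem~\ref{thm:eto_to_ebto} transports the verdicts. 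The verdicts must agree for the induced observation model $L_{\mathrm{obs}}=\{l_f\}$, $\Sigma_{\mathrm{obs}}=C_{\mathrm{obs}}=\varnothing$. We make this reading explicit first. The literal reading, which compares only two Boolean verdicts, is trivially false, since ETO instances that are opaque and ones that are not both exist. The proof therefore reduces to exhibiting one EBTO secret set that lies outside the image of $\mathcal{C}_{\mathrm{ExE}}$ for every choice of $l_{\mathrm{priv}},l_f$ (and for every $\mathcal{A}'\neq\mathcal{A}$, where the system component already differs).

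For the witness we reuse the TA of Fig.~\ref{fig:suffix blindness example}. Its locations are $l_0,l_1,l_2$, edge $a$ has guard $x=1$, and edge $b$ has guard $x=100$. We take the secret set
\[
\mathrm{Gen}_s^{\mathcal{T}}=\{\rho\in\mathrm{Gen}^{\mathcal{T}}\mid \rho \text{ ends in } l_1 \text{ with total duration in } (1,100)\}.
\]
We first check that this set is a legitimate secret set, i.e.\ that it is closed under $\equiv_\tau$. This holds because splitting delays, merging delays, and inserting or removing $0$-delays preserve both the final location and the total duration. Every evolution in this set ends with a strictly positive trailing delay after the event $a$: the event fires at duration exactly $1$, and the total duration is strictly greater than $1$.

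The core step is to show that no choice of $l_{\mathrm{priv}},l_f\in\{l_0,l_1,l_2\}$ makes $\mathcal{C}_{\mathrm{ExE}}$ produce this set. Every set produced by $\mathcal{C}_{\mathrm{ExE}}$ is contained in $\mathrm{Gen}^{\mathcal{T}}_{l_f}$. As noted before Definition~\ref{def:C_ExE}, each evolution in $\mathrm{Gen}^{\mathcal{T}}_{l_f}$ terminates upon entering $l_f$ for the first time. Since location changes occur only through events, such an evolution ends with a discrete transition, unless $l_f=l_0$. We treat the two cases separately.

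If $l_f\neq l_0$, every converted secret evolution ends with an event. Our secret set is nonempty and contains only evolutions ending with a positive delay, so the two sets cannot coincide.

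If $l_f=l_0$, the first arrival at $l_f$ occurs in the zero-length evolution $((l_0,\mathbf{0}))$. The converted set is then contained in the singleton $\{((l_0,\mathbf{0}))\}$, which has duration $0$ and is again distinct from our set.

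Since this covers all possible choices of $l_{\mathrm{priv}},l_f$, the EBTO instance has no ETO counterpart under the conversion.

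The main obstacle is making the notion of ``equivalence'' precise enough that the argument is not vacuous. We would state the formalization explicitly, mirroring Proposition~\ref{prop:ebto_to_lbto}. We would also add a remark that this choice genuinely matters: with the induced observation model, the verdict of our instance differs from anything ETO can express. To see this, let the EBTO observer see $\Sigma_{\mathrm{obs}}=\{a\}$ or $L_{\mathrm{obs}}=\{l_1\}$. The secret evolutions, which sit in $l_1$ with durations in $(1,100)$, are then observationally distinguishable from every non-secret evolution, so the instance is not EBTO. This failure is caused entirely by trailing delays, which $\Psi$ and $\mathsf{Duration}$ discard for ETO runs. This shows that the mismatch is not merely syntactic.
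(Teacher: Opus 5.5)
Your argument is correct under the formalization you state, but it proves a different kind of separation than the paper does.

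The paper keeps the verdict-level reading and exhibits a mismatch for one natural encoding. It uses a two-path TA in which $l_1$ is entered after delay $c_1$ and left after $c_2$, while $l_2$ has the delays swapped. With secret $\rho_1$ and observer $\Sigma_{\mathrm{obs}}=\{a,b\}$, the timestamp of $a$ separates $\rho_1$ from $\rho_2$, so the instance is not EBTO. With $l_{\mathrm{priv}}=l_1$, both runs last $c_1+c_2$, so the instance is ETO. The mechanism is that ETO's scalar duration forgets the \emph{order} of delays, so the gap comes from the richer EBTO observer.

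The paper's proof has two weaknesses. It checks only $\mathcal{A}'=\mathcal{A}$, $l_{\mathrm{priv}}=l_1$, so, as you rightly note about the literal reading, it does not deliver the universal quantifier. Also, $\{\rho_1\}$ is not $\equiv_\tau$-closed and must be replaced by $\llbracket\rho_1\rrbracket_{\equiv_\tau}$. Otherwise a copy of $\rho_1$ with one delay split is a non-secret evolution with the same observation, and the instance is trivially EBTO.

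You instead prove the ETO analogue of Proposition~\ref{prop:ebto_to_lbto}, driven by suffix blindness. This gives a genuinely universal statement over $(l_{\mathrm{priv}},l_f)$ and an explicit closure check. The costs are that the observer plays no role, $\mathcal{A}'\neq\mathcal{A}$ is settled by fiat, and the paper's phenomenon is invisible to your criterion. Indeed, $\llbracket\rho_1\rrbracket_{\equiv_\tau}$ coincides with $\mathcal{C}_{\mathrm{ExE}}(\mathcal{A},l_1,l_f)$ up to trailing zero delays.

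That caveat also shows how fine your criterion is. $\equiv_\tau$-closure puts $\rho\xrightarrow{0}s$ into every nonempty secret set, while $\mathrm{Gen}^{\mathcal{T}}_{l_f}$ excludes it, since its evolutions end with an event. So \emph{every} nonempty secret set already fails to be a $\mathcal{C}_{\mathrm{ExE}}$-image. The substance of your proof is that your witness, with strictly positive time after the last event, still works even if trailing zero delays were admitted into $\mathrm{Gen}^{\mathcal{T}}_{l_f}$. You should say so explicitly.

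Finally, drop the closing remark. First, $\Sigma_{\mathrm{obs}}=\{a\}$ is not the induced observer. Second, the claim that your verdict differs from anything ETO can express is false. For example, $(\mathcal{A},l_0,l_1)$ is not ETO: every run starts in $l_{\mathrm{priv}}=l_0$, so there is no public run, while the run reaching $l_1$ at time $1$ is private. This matches your ``not EBTO'' verdict under $L_{\mathrm{obs}}=\{l_1\}$.
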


\noindent\textit{Proof.} The formal proof is provided in \hypertarget{main:prop:ebto_to_eto}{\hyperref[prop:ebto_to_eto]{Appendix~\ref*{app:sec:prop:ebto_to_eto}}}.

The fundamental barrier to this reduction lies in the information loss inherent in the ETO measurement mechanism. ETO characterizes system behavior through a single scalar value, the accumulated duration of a completed run. In contrast, EBTO preserves the full temporal trajectory of the evolution.

In this section, we established a formal hierarchy of timed opacity classes. We demonstrated that EBTO serves as a unified semantic baseline capable of capturing both language-based observations (LBTO) and duration-based observations (ETO). Crucially, we provided constructive transformations showing that, under strictly constrained observation models, both event-centric and duration-centric formulations are strict instantiations of the proposed evolution-based framework. This systematic unification, illustrated in Fig.~\ref{fig:relations}, confirms that diverse security requirements can be rigorously analyzed within a single cohesive methodology.

\begin{figure}[b]
\centering
\scalebox{0.8}{
\begin{tikzpicture}[
    >=stealth,
    node distance=2.2cm and 2.8cm, 
    tac_box/.style={
        rectangle,
        draw=black,
        thick,
        minimum width=2.4cm, 
        minimum height=1.0cm, 
        align=center,
        font=\small\bfseries, 
        fill=white
    },
    proposed_box/.style={
        tac_box,
        fill=black!8 
    },
    existing_box/.style={
        tac_box
    },
    arrow_line/.style={->, thick, draw=black!80, shorten >=2pt, shorten <=2pt}, 
    arrow_dashed/.style={->, dashed, thick, draw=black!80, shorten >=2pt, shorten <=2pt},
    label_text/.style={font=\scriptsize, text=black!90, inner sep=2pt} 
]

    
    \node[proposed_box] (EBTO) {EBTO};
    
    \node[existing_box, below=1.5cm of EBTO, xshift=-2.2cm] (LBTO) {LBTO};
    \node[existing_box, below=1.5cm of EBTO, xshift=2.2cm] (ETO) {ETO};

    
    \draw[arrow_line] ([xshift=-0.8cm+5pt]EBTO.south) -- 
        node[right, label_text, pos=0.7] {Sufficiency} ([xshift=5pt]LBTO.north);

    \draw[arrow_line]   ([xshift=5pt]ETO.north) -- 
        node[right, label_text, pos=0.3] {Reduction} ([xshift=0.8cm+5pt]EBTO.south);
    \draw[arrow_dashed] ([xshift=0.8cm-5pt]EBTO.south) -- 
        node[left, label_text, pos=0.7] {Restriction} ([xshift=-5pt]ETO.north);

\end{tikzpicture}}
\caption{The formal hierarchy of timed opacity classes.}
\label{fig:relations}
\end{figure}
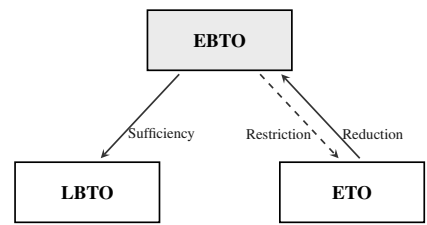

\section{Conclusion}\label{sec:conclusion}
This paper addresses the theoretical fragmentation in timed opacity by establishing a unified semantic framework. We propose a general observation model for timed automata under partial observation of locations, clocks, and events. Based on this model, we formalize the notion of EBTO. We demonstrate the expressiveness of EBTO by establishing a formal hierarchy of opacity classes. Specifically, we prove that EBTO strictly implies LBTO. It serves as a stronger verification baseline by capturing trailing time delays, an aspect that LBTO fails to observe due to suffix blindness. Furthermore, we establish a formal equivalence between EBTO and ETO under constrained observations. Future work will use this framework to investigate the decidability boundaries of timed opacity, identifying the minimal restrictions on system architecture and observer capabilities required for verification.
\section*{Statement on AI Use}
Google Gemini was utilized exclusively for language refinement and grammar correction to improve readability. All technical concepts and analyses are the authors' original work. The authors reviewed all AI-assisted text and bear full responsibility for the final manuscript.
\bibliography{library}

@article{alur1994theory,
   author = {Rajeev Alur and David L. Dill},
   title = {A theory of timed automata},
   journal = {Theor. Comput. Sci.},
   volume = {126},
   number = {2},
   pages = {183--235},
   year = {1994},
   publisher = {Elsevier}
}

@inproceedings{Marques2023,
   author = {Mariana Guimarães Marques and Raphael Julio Barcelos and João Carlos Basilio},
   title = {The use of Time-Interval Automata in the Modeling of Timed Discrete Event Systems and its Application to Opacity},
   booktitle = {IFAC-PapersOnLine},
   volume = {56},
   number = {2},
   pages = {8654--8659},
   month = {7},
   year = {2023},
   doi = {10.1016/j.ifacol.2023.10.042}
}

@article{Saboori2012,
   author = {Anooshiravan Saboori and Christoforos N. Hadjicostis},
   title = {Opacity-enforcing supervisory strategies via state estimator constructions},
   journal = {IEEE Trans. Autom. Control},
   volume = {57},
   number = {5},
   pages = {1155--1165},
   month = {5},
   year = {2011},
   doi = {10.1109/TAC.2011.2170453}
}

@article{Lin2011,
   author = {Feng Lin},
   title = {Opacity of discrete event systems and its applications},
   journal = {Automatica},
   volume = {47},
   number = {3},
   pages = {496--503},
   month = {3},
   year = {2011},
   doi = {10.1016/j.automatica.2011.01.002}
}

@article{Ammar2021,
   author = {Ikhlass Ammar and Yamen El Touati and Moez Yeddes and John Mullins},
   title = {Bounded opacity for timed systems},
   journal = {J. Inf. Secur. Appl.},
   volume = {61},
   pages = {102926},
   month = {9},
   year = {2021},
   doi = {10.1016/j.jisa.2021.102926}
}

@article{Wintenberg2022,
   author = {Andrew Wintenberg and Matthew Blischke and Stéphane Lafortune and Necmiye Ozay},
   title = {A general language-based framework for specifying and verifying notions of opacity},
   journal = {Discrete Event Dyn. Syst.},
   volume = {32},
   number = {2},
   pages = {253--289},
   month = {6},
   year = {2022},
   doi = {10.1007/s10626-021-00357-x}
}

@inproceedings{Klein2024,
   author = {Julian Klein and Paul Kogel and Sabine Glesner},
   title = {Verifying Opacity of Discrete-Timed Automata},
   booktitle = {Proc. IEEE/ACM 12th Int. Conf. Formal Methods in Software Engineering (FormaliSE)},
   pages = {55--65},
   publisher = {ACM},
   month = {4},
   year = {2024},
   doi = {10.1145/3644033.3644376}
}

@article{Rashidinejad2024,
   author = {Aida Rashidinejad and Michel Reniers and Martin Fabian},
   title = {Supervisory Control Synthesis of Timed Automata Using Forcible Events},
   journal = {IEEE Trans. Autom. Control},
   volume = {69},
   number = {2},
   pages = {1074--1080},
   month = {2},
   year = {2023},
   doi = {10.1109/TAC.2023.3275440}
}

@inproceedings{fares2013automatic,
   author = {Elie Fares and Jean-Paul Bodeveix and Mamoun Filali-Amine and Manuel Garnacho},
   title = {An Automatic Technique for Checking the Simulation of Timed Systems},
   booktitle = {Proc. 11th Int. Symp. Automated Technology for Verification and Analysis (ATVA)},
   pages = {71--86},
   publisher = {Springer},
   month = {10},
   year = {2013}
}

@article{Xie2022,
   author = {Yifan Xie and Xiang Yin and Shaoyuan Li},
   title = {Opacity Enforcing Supervisory Control Using Nondeterministic Supervisors},
   journal = {IEEE Trans. Autom. Control},
   volume = {67},
   number = {12},
   pages = {6567--6582},
   month = {12},
   year = {2021},
   doi = {10.1109/TAC.2021.3131125}
}

@inproceedings{saboori2007notions,
   author = {Anooshiravan Saboori and Christoforos N. Hadjicostis},
   title = {Notions of security and opacity in discrete event systems},
   booktitle = {Proc. 46th IEEE Conf. Decision and Control (CDC)},
   pages = {5056--5061},
   publisher = {IEEE},
   year = {2007}
}

@article{Zhang2024,
   author = {Kuize Zhang},
   title = {State-based opacity of labeled real-time automata},
   journal = {Theor. Comput. Sci.},
   volume = {987},
   pages = {114373},
   month = {3},
   year = {2024},
   doi = {10.1016/j.tcs.2023.114373}
}

@inproceedings{andre2024bright,
   author = {{\'{E}}tienne André and Sarah Dépernet and Engel Lefaucheux},
   title = {The bright side of timed opacity},
   booktitle = {Proc. Int. Conf. Formal Eng. Methods},
   pages = {51--69},
   publisher = {Springer},
   year = {2024}
}

@inproceedings{cassez2009dark,
   author = {Franck Cassez},
   title = {The dark side of timed opacity},
   booktitle = {Proc. Int. Conf. Inf. Secur. Assurance},
   pages = {21--30},
   publisher = {Springer},
   year = {2009}
}

@inproceedings{Andr2023,
   author = {{\'{E}}tienne André and Engel Lefaucheux and Didier Lime and Dylan Marinho and Jun Sun},
   title = {Configuring Timing Parameters to Ensure Execution-Time Opacity in Timed Automata},
   booktitle = {Electron. Proc. Theor. Comput. Sci. (EPTCS)},
   volume = {392},
   pages = {1--26},
   month = {10},
   year = {2023},
   doi = {10.4204/EPTCS.392.1}
}

@article{deng2025initial,
   author = {Weilin Deng and Daowen Qiu and Jingkai Yang},
   title = {Initial-Location Opacity and Infinite-Step Opacity of Timed Automata With Integer Resets},
   journal = {IEEE Control Syst. Lett.},
   year = {2025}
}

@inproceedings{An2025,
   author = {Jie An and Qiang Gao and Lingtai Wang and Naijun Zhan and Ichiro Hasuo},
   title = {The Opacity of Timed Automata},
   booktitle = {Proc. Int. Symp. on Formal Methods},
   volume = {14933},
   pages = {620--637},
   publisher = {Springer},
   year = {2024},
   doi = {10.1007/978-3-031-71162-6_32}
}

@article{HARKAT2024109891,
   author = {Houda Harkat and Luis M. Camarinha-Matos and João Goes and Hasmath F.T. Ahmed},
   title = {Cyber-physical systems security: A systematic review},
   journal = {Comput. Ind. Eng.},
   volume = {188},
   pages = {109891},
   year = {2024},
   doi = {10.1016/j.cie.2024.109891}
}

@article{Ru2009,
   author = {Yu Ru and Christoforos N. Hadjicostis},
   title = {Fault diagnosis in discrete event systems modeled by partially observed Petri nets},
   journal = {Discrete Event Dyn. Syst.},
   pages = {551--575},
   year = {2009},
   doi = {10.1007/s10626-009-0082-x}
}

@article{Shu2007,
   author = {Shaolong Shu and Feng Lin and Hao Ying},
   title = {Detectability of discrete event systems},
   journal = {IEEE Trans. Autom. Control},
   pages = {2356--2359},
   year = {2007},
   doi = {10.1109/TAC.2007.910711}
}

\appendix
\setcounter{theorem}{0}
\setcounter{proposition}{0}
\setcounter{lemma}{0} 

\renewcommand{\thetheorem}{\arabic{theorem}}
\renewcommand{\theproposition}{\arabic{proposition}}
\renewcommand{\thelemma}{\arabic{lemma}}

\renewcommand{\theHtheorem}{app.\arabic{theorem}}
\renewcommand{\theHproposition}{app.\arabic{proposition}}
\renewcommand{\theHlemma}{app.\arabic{lemma}}
\section{Proofs}
\subsection{Proof of Lemma~\ref*{lem:YM_PY}}\label{app:sec:lem:YM_PY}
\begin{lemma}\label{lem:YM_PY}
    Let $\mathcal{M}$ be the observation mapping on evolutions and $\mathcal{Y}$ be the mapping from evolutions to timed words (Definition~\ref{def:Y-mapping}). For any evolution $\rho$, the following holds:
    \[
    \mathcal{Y}\bigl(\mathcal{M}(\rho)\bigr) \;=\; P_{\Sigma_{\mathrm{obs}}}\bigl(\mathcal{Y}(\rho)\bigr).
    \]
\end{lemma}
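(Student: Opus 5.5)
Before starting the induction, I would settle how $\mathcal{Y}$ acts on an observed evolution $\mathcal{M}(\rho)\in S^{o}\cdot(\Gamma^{o}\cdot S^{o})^*$. Definition~\ref{def:Y-mapping} is stated for evolutions over $\Gamma$. I extend it verbatim to $\Gamma^{o}=\Sigma_{\mathrm{obs}}\cup\{\sigma_\epsilon\}\cup\mathbb{R}_{\ge0}$ under two conventions. First, only transitions labelled in $\Sigma_{\mathrm{obs}}$ count as discrete events that contribute a letter. Second, $\mathrm{dur}(\sigma_\epsilon)=0$ and $\mathrm{dur}(\sigma)=0$ for $\sigma\in\Sigma_{\mathrm{obs}}$. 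These are the only conventions under which the statement makes sense: $\sigma_\epsilon\notin\Sigma_{\mathrm{obs}}$, and $P_{\Sigma_{\mathrm{obs}}}$ outputs words over $\Sigma_{\mathrm{obs}}$, so $\sigma_\epsilon$ must emit nothing. With them fixed, I would prove the identity by induction on the length $n$ of $\rho$, using the inductive structure $\rho=\rho'\cdot\gamma\cdot s$ from the definition of $\mathcal{M}_\rho$.

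To make the induction go through, I would strengthen the claim to carry the accumulated time. Define $T(\rho)=\sum_{h<n}\mathrm{dur}(\gamma_h)$, and likewise $T(\mathcal{M}(\rho))$ with the extended $\mathrm{dur}$. Alongside the main identity, I would prove $T(\mathcal{M}(\rho))=T(\rho)$. This auxiliary fact is immediate: $\mathcal{M}_\Gamma$ fixes every delay $\tau\in\mathbb{R}_{\ge0}$, and it sends every event to something of duration $0$.

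The base case is $n=0$. Here $\rho=(s)$ and $\mathcal{M}(\rho)=\mathcal{M}_{\mathfrak S}(s)$; both sides are the empty word $\epsilon$, since $P_{\Sigma_{\mathrm{obs}}}(\epsilon)=\epsilon$, and both totals are $0$. For the inductive step $\rho=\rho'\cdot\gamma\cdot s$, I note that $\mathcal{Y}$ is computed left to right. Appending one transition therefore either leaves the word unchanged or appends one letter stamped with the current accumulated time. This gives three cases.
\begin{itemize}
\item $\gamma=\tau\in\mathbb{R}_{\ge0}$. Neither word changes, and both totals increase by $\tau$.
\item $\gamma\in\Sigma_{\mathrm{uo}}$. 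We get $\mathcal{Y}(\rho)=\mathcal{Y}(\rho')\cdot(\gamma,T(\rho'))$, and $P_{\Sigma_{\mathrm{obs}}}$ drops the appended pair. On the other side $\mathcal{M}_\Gamma(\gamma)=\sigma_\epsilon$, which contributes neither a letter nor time. Both sides thus equal their values at $\rho'$.
\item $\gamma\in\Sigma_{\mathrm{obs}}$. The left side becomes $\mathcal{Y}(\mathcal{M}(\rho'))\cdot(\gamma,T(\mathcal{M}(\rho')))$, and the right side becomes $P_{\Sigma_{\mathrm{obs}}}(\mathcal{Y}(\rho'))\cdot(\gamma,T(\rho'))$. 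These agree by the induction hypothesis together with the auxiliary equality of totals.
\end{itemize}

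The only real obstacle is definitional rather than computational. The equality of timestamps relies on the stamp being the \emph{global} accumulated time. This global time must not be perturbed by the masked events, which is exactly what $\mathrm{dur}(\sigma_\epsilon)=0$ guarantees. Everything else is bookkeeping.
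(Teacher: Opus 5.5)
Your proof is correct and rests on the same two facts as the paper's. First, $\mathcal{M}$ fixes every delay and sends every event to a label of duration $0$, so the accumulated timestamps of $\rho$ and $\mathcal{M}(\rho)$ agree position by position; second, $\sigma_\epsilon$ contributes no letter, exactly as $P_{\Sigma_{\mathrm{obs}}}$ discards pairs whose event lies in $\Sigma_{\mathrm{uo}}$.

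The paper argues directly, comparing the indexed pairs $(\gamma_k,t_k)$ over the set of event positions, rather than by induction on length. It also relies on the conventions you make explicit: $\mathrm{dur}(\sigma_\epsilon)=0$, and only actions in $\Sigma$ (hence in $\Sigma_{\mathrm{obs}}$) are collected. Your induction is simply a more formal rendering of the same argument.
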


\begin{proof}
    Let's consider an arbitrary evolution $\rho = s_0 \xrightarrow{\gamma_0} s_1 \xrightarrow{\gamma_1} \cdots \xrightarrow{\gamma_{n-1}} s_n$.

    First, we note that the global timestamps calculated by $\mathcal{Y}$ are identical for $\rho$ and $\mathcal{M}(\rho)$.
    The observation mapping $\mathcal{M}$ maps unobservable events to $\sigma_\epsilon$ but leaves time delays $\tau \in \mathbb{R}_{\geq 0}$ unchanged. Since $\mathrm{dur}(\sigma) = 0$ for any discrete event $\sigma \in \Sigma$ and naturally $\mathrm{dur}(\sigma_\epsilon) = 0$, the cumulative duration $t_k = \sum_{h=0}^{k-1} \mathrm{dur}(\gamma_h)$ at any step $k$ remains unchanged.

    It implies that we only need to compare the sequences of event-timestamp pairs. Let $K = \{ k \mid \gamma_k \in \Sigma \}$ be the set of indices where discrete events occur in $\rho$.
\begin{itemize}
    \item Right-hand side:
    The term $\mathcal{Y}(\rho)$ generates the sequence of pairs $(\gamma_k, t_k)$ for all $k \in K$.
    The projection function $P_{\Sigma_{\mathrm{obs}}}$ then retains only those pairs where $\gamma_k \in \Sigma_{\mathrm{obs}}$.
    \item Left-hand side:
    The mapping $\mathcal{M}$ converts $\gamma_k$ to $\sigma_\epsilon$ if $\gamma_k \in \Sigma_{\mathrm{uo}}$, and keeps $\gamma_k$ unchanged if $\gamma_k \in \Sigma_{\mathrm{obs}}$.
    According to Definition~\ref{def:Y-mapping}, $\mathcal{Y}$ constructs the timed word by collecting actions strictly in $\Sigma$. Since $\sigma_\epsilon \notin \Sigma$, the indices corresponding to $\Sigma_{\mathrm{uo}}$ are not included in the result. Consequently, $\mathcal{Y}(\mathcal{M}(\rho))$ consists of exactly the pairs $(\gamma_k, t_k)$ where $k \in K$ and $\gamma_k \in \Sigma_{\mathrm{obs}}$.
\end{itemize}
\end{proof}

\subsection{Proof of Lemma~\ref*{lem:obs_equiv}}\label{app:sec:lem:obs_equiv}
\begin{lemma}[\hyperlink{main:lem:obs_equiv}{Observation Correspondence for Timed Words}]\label{lem:obs_equiv}
Let $\mathcal{A}$ be a TA where locations and clocks are unobservable ($L_{\mathrm{obs}} = \varnothing$ and $C_{\mathrm{obs}} = \varnothing$).
For any two evolutions $\rho, \rho' \in \mathrm{Gen}^{\mathcal{T}}$, if their observations are equivalent in the EBTO framework, then the projections of their generated timed words are identical:
\[\begin{aligned}
&\llbracket\mathcal{M}(\rho)\rrbracket_{\equiv_{\mathrm{obs}}} = \llbracket\mathcal{M}(\rho')\rrbracket_{\equiv_{\mathrm{obs}}} \\&\implies P_{\Sigma_{\mathrm{obs}}}(\mathcal{Y}(\rho)) = P_{\Sigma_{\mathrm{obs}}}(\mathcal{Y}(\rho')).
\end{aligned}
\]
\end{lemma}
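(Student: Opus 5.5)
The plan is to show that $\mathcal{Y}$, extended to observed evolutions as in the proof of Lemma~\ref{lem:YM_PY}, is invariant under $\equiv_{\mathrm{obs}}$, and then to invoke Lemma~\ref{lem:YM_PY}. On an observed sequence, $\mathcal{Y}$ keeps only actions in $\Sigma_{\mathrm{obs}}$ (since $\sigma_\epsilon\notin\Sigma$), stamps each with the sum of preceding durations, and sets $\mathrm{dur}(\sigma_\epsilon)=0$. Assume $\llbracket\mathcal{M}(\rho)\rrbracket_{\equiv_{\mathrm{obs}}}=\llbracket\mathcal{M}(\rho')\rrbracket_{\equiv_{\mathrm{obs}}}$. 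Then $\mathcal{M}(\rho)\equiv_{\mathrm{obs}}\mathcal{M}(\rho')$, so the two are linked by a finite chain of elementary rewrites from $\equiv_\epsilon$ or $\equiv_\tau$. Their generating pairs are symmetric, so it suffices to treat a single rewrite in one direction and then induct on the chain length.

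\textbf{Invariance under each elementary rewrite.} Write an observed sequence as $\alpha\,X\,\beta$, where $X$ is the local pattern being rewritten.
\begin{enumerate}
\item ($\equiv_\epsilon$) Removing $s\xrightarrow{\sigma_\epsilon}s$ deletes one $\sigma_\epsilon$ action. It contributes no pair to the timed word and has duration $0$. Hence every observable event in $\beta$ keeps the same cumulative sum, and no event is added or dropped.
\item ($\equiv_\tau$, merging) Replacing consecutive delays $\tau,\tau'$ by $\tau+\tau'$ removes no event. Every event in $\beta$ has the same preceding-duration sum, because addition is associative. Events in $\alpha$ are untouched.
\item ($\equiv_\tau$, zero delay) Deleting $\xrightarrow{0}$ changes no sum and removes no event.
\end{enumerate}
The state components are irrelevant to $\mathcal{Y}$. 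So $\mathcal{Y}$ is the same on both sides of each rewrite, and by induction $\mathcal{Y}(\mathcal{M}(\rho))=\mathcal{Y}(\mathcal{M}(\rho'))$.

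\textbf{Conclusion.} Lemma~\ref{lem:YM_PY} then gives
\[
P_{\Sigma_{\mathrm{obs}}}(\mathcal{Y}(\rho))=\mathcal{Y}(\mathcal{M}(\rho))=\mathcal{Y}(\mathcal{M}(\rho'))=P_{\Sigma_{\mathrm{obs}}}(\mathcal{Y}(\rho')).
\]
The hypothesis $L_{\mathrm{obs}}=C_{\mathrm{obs}}=\varnothing$ is not really needed for this direction. It only makes the observed states trivial, so the patterns in Definition~\ref{def:obs-equivalence} apply freely.

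\textbf{Main obstacle.} The hard step is making precise that $\mathcal{Y}$ is well defined on observed sequences that are not generated evolutions, and that the rewrites act on index positions in a controlled way. I would handle this by tracking an index map between the observable-event positions before and after each rewrite. Each rewrite shifts positions in $\beta$ by a fixed offset and leaves positions in $\alpha$ fixed, so the timestamps can be compared position by position.
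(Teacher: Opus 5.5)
Your proof is correct and follows the paper's own argument: you use Lemma~\ref{lem:YM_PY} to reduce the claim to invariance of $\mathcal{Y}$ on observed sequences under each elementary $\equiv_\epsilon$ and $\equiv_\tau$ rewrite (zero duration of $\sigma_\epsilon$, associativity and identity of addition), and then induct along the rewrite chain. Your extra remarks are accurate refinements that the paper leaves implicit: the hypothesis $L_{\mathrm{obs}}=C_{\mathrm{obs}}=\varnothing$ is not actually needed for this direction, and the only real bookkeeping is the fixed index shift in the right context.
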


\begin{proof}
Let $\rho$ and $\rho'$ be two arbitrary evolutions.
From Lemma~\ref{lem:YM_PY}, we have the identity $P_{\Sigma_{\mathrm{obs}}}(\mathcal{Y}(\rho)) = \mathcal{Y}(\mathcal{M}(\rho))$.
Thus, the implication to be proven reduces to showing that:
\[
\llbracket\mathcal{M}(\rho)\rrbracket_{\equiv_{\mathrm{obs}}} = \llbracket\mathcal{M}(\rho')\rrbracket_{\equiv_{\mathrm{obs}}} \implies \mathcal{Y}(\mathcal{M}(\rho)) = \mathcal{Y}(\mathcal{M}(\rho')).
\]

The observational equivalence $\equiv_{\mathrm{obs}}$ is defined as the closure of silent-step equivalence ($\equiv_\epsilon$) and time additivity, continuity, and reflexivity equivalence ($\equiv_\tau$). We demonstrate that the mapping $\mathcal{Y}$ is invariant under both elementary operations applied to the observation sequence $\mathcal{M}(\rho)$.

\begin{enumerate}
    \item Invariance under $\equiv_\epsilon$:
    Suppose $\mathcal{M}(\rho')$ is derived from $\mathcal{M}(\rho)$ by inserting or removing unobservable events (mapped to $\sigma_\epsilon$).
    Recall that $\mathcal{Y}$ constructs a timed word by collecting pairs $(\sigma_k, t_k)$ where $\sigma_k \in \Sigma_{\mathrm{obs}}$.
    \begin{itemize}
        \item Event sequence: Since $\sigma_\epsilon \notin \Sigma_{\mathrm{obs}}$, the insertion or removal of $\sigma_\epsilon$ does not alter the sequence of recorded discrete events.
        \item Timestamps: The global timestamp $t_k$ is the sum of all preceding durations. Since $\mathrm{dur}(\sigma_\epsilon) = 0$, the cumulative duration remains strictly invariant.
    \end{itemize}
    Thus, $\mathcal{Y}(\mathcal{M}(\rho)) = \mathcal{Y}(\mathcal{M}(\rho'))$.

    \item  Invariance under $\equiv_\tau$:
    Suppose $\mathcal{M}(\rho')$ is derived from $\mathcal{M}(\rho)$ by merging two consecutive delays $d_1, d_2$ into $d = d_1 + d_2$ (or splitting them), or by inserting or removing zero delays ($d = 0$). Consider any observable event $\sigma_k$ occurring after these time-related modifications. Its timestamp $t_k$ is calculated as the sum of all preceding durations. By the associativity and the identity property of addition in $\mathbb{R}_{\geq 0}$:
\[\begin{aligned}
    \cdots + d_1 + d_2 + \cdots \;=\; \cdots + (d_1 + d_2) + \cdots, \\\text{and} 
    \quad \cdots + d + 0 + \cdots \;=\; \cdots + d + \cdots.
\end{aligned}
\]
    Consequently, the calculated global timestamps for all subsequent events are identical in both $\mathcal{M}(\rho)$ and $\mathcal{M}(\rho')$.
    Thus, $\mathcal{Y}(\mathcal{M}(\rho)) = \mathcal{Y}(\mathcal{M}(\rho'))$.
\end{enumerate}

Since $\mathcal{Y}$ yields identical results for any single step of $\equiv_\epsilon$ or $\equiv_\tau$, by induction, it yields identical results for any two observations in the same equivalence class.
Therefore, $\mathcal{Y}(\mathcal{M}(\rho)) = \mathcal{Y}(\mathcal{M}(\rho'))$, which implies $P_{\Sigma_{\mathrm{obs}}}(\mathcal{Y}(\rho)) = P_{\Sigma_{\mathrm{obs}}}(\mathcal{Y}(\rho'))$.
\end{proof}

\subsection{Proof of Theorem~\ref*{app:thm:lbto_to_ebto}}\label{app:sec:thm:lbto_to_ebto}
\begin{theorem}[\hyperlink{main:thm:lbto_to_ebto}{Sufficiency of EBTO for LBTO}]\label{app:thm:lbto_to_ebto}
Let $\mathcal{A}$ be a TA with SG $\mathcal{T}=\mathcal{S}(\mathcal{A})$, let $\Sigma_{\mathrm{obs}} \subseteq \Sigma$ be a given set of observable events, locations and clocks are unobservable, i.e., $L_{\mathrm{obs}} = \varnothing$ and $C_{\mathrm{obs}} = \varnothing$, and let $\mathcal{L}_s$ be a secret language. Let $(\mathcal{A}, \mathrm{Gen}_s^{\mathcal{T}}) = \mathcal{C}_{\mathrm{LE}}(\mathcal{A}, \mathcal{L}_s)$. 

The TA $\mathcal{A}$ is LBTO w.r.t. $\Sigma_{\mathrm{obs}}$ and $\mathcal{L}_s$ if it is EBTO w.r.t. the $\mathcal{M}$ and $\mathrm{Gen}_s^{\mathcal{T}}$.
\end{theorem}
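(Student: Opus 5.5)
Assume $\mathcal{A}$ is EBTO w.r.t. $\mathcal{M}$ and $\mathrm{Gen}_s^{\mathcal{T}}=\mathcal{Y}^{-1}(\mathcal{L}_s)$, and derive LBTO by a direct witness-transfer argument. Fix an arbitrary $\omega \in \mathcal{L}_{\mathrm{gen}}(\mathcal{A}) \cap \mathcal{L}_s$. Since $\mathcal{L}_{\mathrm{gen}}(\mathcal{A}) = \mathcal{Y}(\mathrm{Gen}^{\mathcal{T}})$, there is some $\rho \in \mathrm{Gen}^{\mathcal{T}}$ with $\mathcal{Y}(\rho)=\omega$. Because $\omega\in\mathcal{L}_s$, this $\rho$ lies in $\mathcal{Y}^{-1}(\mathcal{L}_s)=\mathrm{Gen}_s^{\mathcal{T}}$.

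Before applying the EBTO condition, check that $\mathrm{Gen}_s^{\mathcal{T}}$ is a legitimate secret set in the sense of Definition~\ref{def:ebto}, i.e.\ closed under $\equiv_\tau$. Merging or splitting consecutive delays, or inserting or removing zero delays, keeps the subsequence of discrete events unchanged. By associativity of addition and neutrality of $0$, it also leaves every cumulative timestamp $t_{k_j}$ unchanged. Hence $\mathcal{Y}$ is invariant on $\llbracket\rho\rrbracket_{\equiv_\tau}$. This is the same computation as case 2 of the proof of Lemma~\ref{lem:obs_equiv}, now applied to evolutions rather than observations. We also need that the $\equiv_\tau$-variants of a generated evolution are again generated. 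This holds by Property~\ref{property:time additivity} and the fact that the initial state is unchanged.

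With this settled, EBTO yields $\rho' \in \mathrm{Gen}^{\mathcal{T}} \setminus \mathrm{Gen}_s^{\mathcal{T}}$ such that $\llbracket\mathcal{M}(\rho)\rrbracket_{\equiv_{\mathrm{obs}}}=\llbracket\mathcal{M}(\rho')\rrbracket_{\equiv_{\mathrm{obs}}}$. Set $\omega' := \mathcal{Y}(\rho')$. Then $\omega' \in \mathcal{Y}(\mathrm{Gen}^{\mathcal{T}})=\mathcal{L}_{\mathrm{gen}}(\mathcal{A})$. Also $\omega'\notin\mathcal{L}_s$: otherwise $\rho'\in\mathcal{Y}^{-1}(\mathcal{L}_s)=\mathrm{Gen}_s^{\mathcal{T}}$, a contradiction. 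Because the preimage is taken exactly, no further argument is needed here. Finally, since $L_{\mathrm{obs}}=C_{\mathrm{obs}}=\varnothing$, Lemma~\ref{lem:obs_equiv} gives $P_{\Sigma_{\mathrm{obs}}}(\mathcal{Y}(\rho))=P_{\Sigma_{\mathrm{obs}}}(\mathcal{Y}(\rho'))$, that is, $P_{\Sigma_{\mathrm{obs}}}(\omega)=P_{\Sigma_{\mathrm{obs}}}(\omega')$. Since $\omega$ was arbitrary, this is exactly Definition~\ref{def:lbto}.

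The main obstacle is the well-formedness step, namely that $\mathcal{Y}^{-1}(\mathcal{L}_s)$ is $\equiv_\tau$-closed, so that Definition~\ref{def:ebto} can be invoked. I would handle it by induction on the number of elementary $\equiv_\tau$ rewrites, showing $\mathcal{Y}$ invariance for each single rewrite. The remaining steps are bookkeeping through the exact preimage together with Lemma~\ref{lem:obs_equiv}.
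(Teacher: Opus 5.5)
Your proof is correct and follows the paper's argument: you pick $\rho$ with $\mathcal{Y}(\rho)=\omega$, take the EBTO witness $\rho'$, set $\omega'=\mathcal{Y}(\rho')$ (non-secret because the preimage $\mathcal{Y}^{-1}(\mathcal{L}_s)$ is exact), and conclude with Lemma~\ref{lem:obs_equiv}. Your extra check that $\mathcal{Y}^{-1}(\mathcal{L}_s)$ is $\equiv_\tau$-closed is sound and makes explicit something the paper leaves implicit, namely that $\mathcal{C}_{\mathrm{LE}}$ produces a legitimate secret set; it is not an obstacle for the implication, though, since the proof only uses the $\forall\exists$ clause of EBTO.
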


\begin{proof}
Assume that the TA $\mathcal{A}$ is EBTO w.r.t. the secret evolution set $\mathrm{Gen}_s^{\mathcal{T}}$ and observation mapping $\mathcal{M}$. Let $\omega \in \mathcal{L}_{\mathrm{gen}}(\mathcal{A}) \cap \mathcal{L}_s$ be an arbitrary generated secret timed word.

To prove that $\mathcal{A}$ is LBTO, we must find a non-secret timed word $\omega' \in \mathcal{L}_{\mathrm{gen}}(\mathcal{A}) \setminus \mathcal{L}_s$ such that $P_{\Sigma_{\mathrm{obs}}}(\omega) = P_{\Sigma_{\mathrm{obs}}}(\omega')$.

Since $\omega \in \mathcal{L}_{\mathrm{gen}}(\mathcal{A})$, there exists a generated evolution $\rho \in \mathrm{Gen}^{\mathcal{T}}$ such that $\mathcal{Y}(\rho) = \omega$. Because $\omega \in \mathcal{L}_s$ and $\mathrm{Gen}_s^{\mathcal{T}} = \mathcal{Y}^{-1}(\mathcal{L}_s)$, it holds that $\rho \in \mathrm{Gen}_s^{\mathcal{T}}$. Thus, $\rho$ is a secret evolution.

By the assumption that $\mathcal{A}$ is EBTO, there exists an evolution $\rho' \in \mathrm{Gen}^{\mathcal{T}}$ such that $\rho' \notin \mathrm{Gen}_s^{\mathcal{T}}$ and $\llbracket\mathcal{M}(\rho)\rrbracket_{\equiv_{\mathrm{obs}}} = \llbracket\mathcal{M}(\rho')\rrbracket_{\equiv_{\mathrm{obs}}}$.

Let $\omega' = \mathcal{Y}(\rho')$. Since $\rho' \notin \mathrm{Gen}_s^{\mathcal{T}}$, it follows that $\omega' \notin \mathcal{L}_s$. Therefore, $\omega'$ is a non-secret timed word in $\mathcal{L}_{\mathrm{gen}}(\mathcal{A})$.

By applying Lemma~\ref{lem:obs_equiv}, the observational equivalence of the evolutions implies the equality of their projected timed words:
\[
P_{\Sigma_{\mathrm{obs}}}(\mathcal{Y}(\rho)) = P_{\Sigma_{\mathrm{obs}}}(\mathcal{Y}(\rho')).
\]
Substituting $\omega = \mathcal{Y}(\rho)$ and $\omega' = \mathcal{Y}(\rho')$ yields $P_{\Sigma_{\mathrm{obs}}}(\omega) = P_{\Sigma_{\mathrm{obs}}}(\omega')$.

For every generated secret timed word $\omega$, there exists a generated non-secret timed word $\omega'$ with an identical projection. By definition, the TA $\mathcal{A}$ is LBTO.
\end{proof}

\subsection{Proof of Proposition~\ref*{app:prop:ebto_to_lbto}}\label{app:sec:prop:ebto_to_lbto}
\begin{proposition}[\hyperlink{main:prop:ebto_to_lbto}{Strict Expressiveness of EBTO over LBTO}]\label{app:prop:ebto_to_lbto}
There exists a TA $\mathcal{A}$ and a set of secret evolutions $\mathrm{Gen}_s^{\mathcal{T}}$ such that no LBTO secret language $\mathcal{L}_s$ over any observable event set can exactly capture the same secret behaviors. Specifically, $\mathrm{Gen}_s^{\mathcal{T}} \neq \mathcal{Y}^{-1}(\mathcal{L}_s)$ for all possible $\mathcal{L}_s$.
\end{proposition}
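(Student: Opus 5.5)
The proof is by explicit construction: exhibit one TA and one secret set $\mathrm{Gen}_s^{\mathcal{T}}$ that is closed under $\equiv_\tau$, and show that membership in it is not determined by the timed word $\mathcal{Y}(\rho)$. The key observation is that every preimage $\mathcal{Y}^{-1}(\mathcal{L}_s)$ is \emph{$\mathcal{Y}$-saturated}: if $\rho_1\in\mathcal{Y}^{-1}(\mathcal{L}_s)$ and $\mathcal{Y}(\rho_2)=\mathcal{Y}(\rho_1)$ with $\rho_2\in\mathrm{Gen}^{\mathcal{T}}$, then $\rho_2\in\mathcal{Y}^{-1}(\mathcal{L}_s)$. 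This uses only the definition of the preimage and is independent of $\Sigma_{\mathrm{obs}}$, which covers the clause ``over any observable event set''. It therefore suffices to find two generated evolutions $\rho_1,\rho_2$ with $\mathcal{Y}(\rho_1)=\mathcal{Y}(\rho_2)$, such that $\rho_1\in\mathrm{Gen}_s^{\mathcal{T}}$ and $\rho_2\notin\mathrm{Gen}_s^{\mathcal{T}}$.

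For the witness we use the suffix-blindness phenomenon. Take the TA of Fig.~\ref{fig:suffix blindness example}, or even a single location $l_0$ with clock $x$, invariant $\mathrm{true}$ and no edges. Define
\[
\mathrm{Gen}_s^{\mathcal{T}}=\{\rho\in\mathrm{Gen}^{\mathcal{T}}\mid \mathsf{Duration}(\rho)\ge 5\},
\]
where $\mathsf{Duration}(\rho)$ is the sum of $\mathrm{dur}(\gamma_h)$ over all actions, trailing delays included. In the one-location TA, $\rho_1=(l_0,\mathbf 0)\xrightarrow{5}(l_0,\mathbf 0+5)$ and $\rho_2=((l_0,\mathbf 0))$ are both generated, and both have $\mathcal{Y}(\rho_1)=\mathcal{Y}(\rho_2)=\epsilon$ because neither contains a discrete event. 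Yet $\rho_1$ is secret and $\rho_2$ is not. Hence, for any $\mathcal{L}_s$: if $\epsilon\in\mathcal{L}_s$ then $\rho_2\in\mathcal{Y}^{-1}(\mathcal{L}_s)\setminus\mathrm{Gen}_s^{\mathcal{T}}$, and otherwise $\rho_1\in\mathrm{Gen}_s^{\mathcal{T}}\setminus\mathcal{Y}^{-1}(\mathcal{L}_s)$. Either way the two sets differ.

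The remaining step, and the one requiring care, is to check that $\mathrm{Gen}_s^{\mathcal{T}}$ is a legitimate secret set in the sense of Definition~\ref{def:ebto}, i.e.\ closed under $\equiv_\tau$. We argue this by showing that $\mathsf{Duration}$ is invariant under each elementary $\equiv_\tau$ step, by the same associativity and identity argument used in the proof of Lemma~\ref{lem:obs_equiv}. Merging $\tau,\tau'$ into $\tau+\tau'$ preserves the sum, and inserting or removing a $0$-delay adds nothing. Generatedness is also preserved: the first state is unchanged, and the merged transition exists by Property~\ref{property:time additivity}. An induction on the length of the $\equiv_\tau$ chain then gives $\llbracket\rho\rrbracket_{\equiv_\tau}\subseteq\mathrm{Gen}_s^{\mathcal{T}}$.

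If a witness with discrete events is preferred, we would instead use the TA of Fig.~\ref{fig:suffix blindness example}, taking $\rho_1$ to end with a trailing delay after $a$ and $\rho_2$ to stop right after $a$. The same threshold-on-total-duration secret works, since both evolutions have $\mathcal{Y}$-image $(a,1)$.
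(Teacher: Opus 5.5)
Your proof is correct and takes essentially the same route as the paper: exploit suffix blindness by exhibiting two generated evolutions with the same $\mathcal{Y}$-image, one secret and one not, then split on whether that common timed word lies in $\mathcal{L}_s$. The differences are cosmetic: you use a smaller witness (an edgeless TA, the empty timed word, a threshold on total duration) and check $\equiv_\tau$-closure explicitly via invariance of duration, whereas the paper uses the self-loop TA with a threshold on the trailing delay after $a$ and gets closure for free by defining $\mathrm{Gen}_s^{\mathcal{T}}$ as a union of $\equiv_\tau$-classes.
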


\begin{proof}
We construct a counter-example exploiting the suffix blindness of timed words. 

Consider a TA $\mathcal{A}$, as shown in Fig.~\ref{fig:ta_structure}, with $L = \{l_0\}$, $\Sigma = \{a\}$, and a single clock $x$. The only edge is $(l_0, a, \text{true}, \{x\}, l_0)$, with $I(l_0) = \text{true}$. Let $\mathcal{T} = \mathcal{S}(\mathcal{A})$ be its semantic graph. A fragment of $\mathcal{T}$ is illustrated in Fig.~\ref{fig:sg_fragment}.
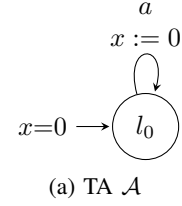
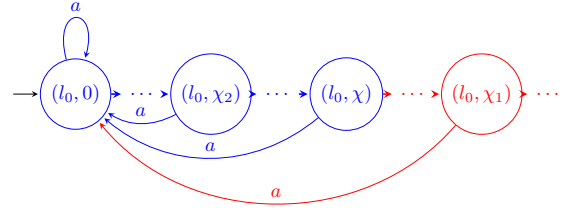
\begin{figure}[t]
    \centering
    \begin{subfigure}{\columnwidth}
        \centering
        \scalebox{1.0}{
        \begin{tikzpicture}[->, >=stealth, shorten >=1pt, node distance=2cm, on grid, auto]
            \node[state, initial, initial text={$x{=}0$}] (s0) {$l_0$};
            \path (s0) edge [loop above] node [align=center] {$a$\\$x:=0$} (s0);
        \end{tikzpicture}}
        \caption{TA $\mathcal{A}$}
        \label{fig:ta_structure}
    \end{subfigure}

    \vspace{1.5em} 

    \begin{subfigure}{\columnwidth}
  \centering
  \scalebox{0.75}{
  \begin{tikzpicture}[->, >=stealth, shorten >=1pt, node distance=2cm, on grid, auto]
    \node[state, initial, initial text={}, draw=blue, text=blue] (s0) {$(l_0, 0)$};
    \node[right=1.2cm of s0, text=blue] (d1) {$\dots$};
    \node[state, draw=blue, text=blue, right=1.2cm of d1] (s2) {$(l_0, \chi_2)$};
    \node[right=1.2cm of s2, text=blue] (d_pre_chi) {$\dots$};
    
    \node[state, draw=blue, text=blue, right=1.2cm of d_pre_chi] (s_chi) {$(l_0, \chi)$};
    
    \node[right=1.2cm of s_chi, text=red] (d2) {$\dots$};
    \node[state, draw=red, text=red, right=1.2cm of d2] (s1) {$(l_0, \chi_1)$};
    \node[right=1.2cm of s1, text=red] (d3) {$\dots$};
    \path [blue, thick] (s0) edge (d1)
                        (d1) edge (s2)
                        (s2) edge (d_pre_chi)
                        (d_pre_chi) edge (s_chi);
    \path [red, thick]  (s_chi) edge (d2)
                        (d2) edge (s1)
                        (s1) edge (d3);

    \path (s0) edge [loop above, blue] node {$a$} (s0)
          (s2) edge [bend left=30, blue] node [above] {$a$} (s0)
          (s_chi) edge [bend left=40, blue] node [above] {$a$} (s0)
          (s1) edge [bend left=50, red] node [above] {$a$} (s0);

  \end{tikzpicture}}
        \caption{Fragment of semantic graph $\mathcal{T}$}
        \label{fig:sg_fragment}
    \end{subfigure}

    \caption{The counter-example for suffix blindness: (a) shows the TA structure, and (b) illustrates states in the semantic graph. Blue states represent non-secret behaviors ($\tau \le \chi$), while red states represent secret behaviors ($\tau > \chi$).}
    \label{fig:combined_suffix_blindness}
\end{figure}

We define the EBTO secret evolutions $\mathrm{Gen}_s^{\mathcal{T}}$ as the set of evolutions ending with event $a$ followed by a trailing delay greater than a constant $\chi \in \mathbb{R}_{>0}$:
$$
\mathrm{Gen}_s^{\mathcal{T}} = \bigcup \left\{ \llbracket\rho\rrbracket_{\equiv_\tau} \;\middle|\; \rho = \rho_{\mathrm{pre}} \xrightarrow{a} (l_0, 0) \xrightarrow{\tau} (l_0, \tau) \land \tau > \chi \right\}.
$$

Consider two evolutions $\rho_1$ and $\rho_2$ generated by $\mathcal{T}$:
\begin{itemize}
    \item $\rho_1 \;=\; (l_0, 0) \xrightarrow{a} (l_0, 0) \xrightarrow{\chi_1} (l_0, \chi_1)$ with $\chi_1 > \chi$. By definition, $\rho_1 \in \mathrm{Gen}_s^{\mathcal{T}}$.
    \item $\rho_2 \;=\; (l_0, 0) \xrightarrow{a} (l_0, 0) \xrightarrow{\chi_2} (l_0, \chi_2)$ with $\chi_2 \leq \chi$. By definition, $\rho_2 \notin \mathrm{Gen}_s^{\mathcal{T}}$.
\end{itemize}

Applying the mapping $\mathcal{Y}$, both evolutions generate the identical timed word because the trailing delays produce no discrete events:
$$
\mathcal{Y}(\rho_1) = (a, 0) = \mathcal{Y}(\rho_2).
$$
Let $\omega = (a, 0)$. Assume for contradiction that there exists a secret language $\mathcal{L}_s$ such that $\mathrm{Gen}_s^{\mathcal{T}} = \mathcal{Y}^{-1}(\mathcal{L}_s)$. 
We evaluate $\omega$ against $\mathcal{L}_s$:
\begin{itemize}
    \item If $\omega \in \mathcal{L}_s$, then $\mathcal{Y}^{-1}(\mathcal{L}_s)$ must contain all evolutions mapping to $\omega$. Thus, $\rho_2 \in \mathcal{Y}^{-1}(\mathcal{L}_s)$. However, $\rho_2 \notin \mathrm{Gen}_s^{\mathcal{T}}$, causing a contradiction.
    \item If $\omega \notin \mathcal{L}_s$, then $\mathcal{Y}^{-1}(\mathcal{L}_s)$ cannot contain any evolution mapping to $\omega$. Thus, $\rho_1 \notin \mathcal{Y}^{-1}(\mathcal{L}_s)$. However, $\rho_1 \in \mathrm{Gen}_s^{\mathcal{T}}$, causing a contradiction.
\end{itemize}
Therefore, no such $\mathcal{L}_s$ exists. EBTO can specify duration-based suffix secrets that are structurally impossible to represent in LBTO.
\end{proof}

\subsection{Proof of Lemma~\ref*{lem:duration_strict}}\label{app:sec:lem:duration_strict}
\begin{lemma}[\hyperlink{main:lem:duration_strict}{Observation Correspondence for final Reachability}]\label{lem:duration_strict}
Let $\mathcal{A}$ be a TA with a final location $l_f$. Under the observation model where only the final location is observable ($L_{\mathrm{obs}} = \{l_f\}$) and all events and clocks are unobservable ($\Sigma_{\mathrm{obs}} = \varnothing$, $C_{\mathrm{obs}} = \varnothing$), for any two evolutions $\rho, \rho' \in \mathrm{Gen}_{l_f}^{\mathcal{T}}$, they are observably equivalent if and only if their total durations are equal:
\[\begin{aligned}
\llbracket \mathcal{M}(\rho) \rrbracket_{\equiv_{\mathrm{obs}}} = \llbracket \mathcal{M}(\rho') \rrbracket_{\equiv_{\mathrm{obs}}} \iff \\\mathsf{Duration}(\rho) = \mathsf{Duration}(\rho').
\end{aligned}
\]
\end{lemma}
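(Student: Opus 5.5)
The plan is to prove the two directions separately. The ``$\Rightarrow$'' direction uses an invariant of the elementary rewriting steps. The ``$\Leftarrow$'' direction uses a normal form for observations of evolutions in $\mathrm{Gen}_{l_f}^{\mathcal{T}}$.

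Throughout, write $o := (l_\epsilon, u_\epsilon^C)$ for the fully masked observed state, where $u_\epsilon^C$ maps every clock to $u_\epsilon$, and write $f := (l_f, u_\epsilon^C)$. Since $C_{\mathrm{obs}}=\varnothing$, every observed valuation is $u_\epsilon^C$. By the extended addition on $\mathcal{U}_{\mathrm{obs}}$ we have $u_\epsilon^C+\tau = u_\epsilon^C$, so a delay step in an observation has the form $o \xrightarrow{\tau} o$ (or $f\xrightarrow{\tau} f$). Such steps therefore fit the pattern $(l,u)\xrightarrow{\tau}(l,u+\tau)$ required by $\equiv_\tau$.

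\emph{Step 1 (shape of observations).} Take $\rho\in\mathrm{Gen}_{l_f}^{\mathcal{T}}$ and first dispose of the degenerate case $l_0=l_f$, where $\rho=(s_0)$ and both sides are trivial. Otherwise $\rho$ ends with the first discrete transition into $l_f$, and every earlier state lies in $L\setminus\{l_f\}\subseteq L_{\mathrm{uo}}$. Since $\Sigma_{\mathrm{obs}}=\varnothing$, $\mathcal{M}(\rho)$ is a sequence of $o$'s linked by delays $\tau_i$ and by $\sigma_\epsilon$-steps $o\xrightarrow{\sigma_\epsilon}o$. It terminates in exactly one step $o\xrightarrow{\sigma_\epsilon}f$.

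\emph{Step 2 ($\Leftarrow$, normal form).} For $D\ge 0$ let $N_D := o\xrightarrow{D}o\xrightarrow{\sigma_\epsilon}f$. I would show $\mathcal{M}(\rho)\equiv_{\mathrm{obs}} N_{\mathsf{Duration}(\rho)}$ by three rewriting moves:
\begin{enumerate}
\item delete every internal $o\xrightarrow{\sigma_\epsilon}o$ using $\equiv_\epsilon$, which applies because source and target observed states coincide;
\item merge the now-adjacent delays using the additivity clause of $\equiv_\tau$, by induction on the number of delay steps;
\item if no delay occurs at all, insert $o\xrightarrow{0}o$ using the reflexivity clause.
\end{enumerate}
The resulting single delay equals the sum of all delays in $\rho$, which is $\mathsf{Duration}(\rho)=\mathsf{Duration}(\Psi(\rho))$ since there is no trailing delay. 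Hence equal durations give the same normal form, and so equal classes.

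\emph{Step 3 ($\Rightarrow$, invariant).} Define $\Delta$ on observed sequences as the sum of the real-valued action labels. I would check that each elementary rule preserves $\Delta$. The $\equiv_\epsilon$ rule removes only a $\sigma_\epsilon$ step. The additivity rule replaces $\tau,\tau'$ by $\tau+\tau'$. The reflexivity rule removes a $0$. Hence $\Delta$ is constant on $\equiv_{\mathrm{obs}}$-classes by induction on the length of a derivation in $(\equiv_\epsilon\cup\equiv_\tau)^*$. Equal classes thus give $\Delta(\mathcal{M}(\rho))=\Delta(\mathcal{M}(\rho'))$. Because $\mathcal{M}_\Gamma$ leaves delays unchanged, $\Delta(\mathcal{M}(\rho))=\mathsf{Duration}(\rho)$.

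The main obstacle is the bookkeeping in Step 2: verifying that each rewrite matches the rule pattern with appropriate contexts $\alpha,\beta$. The key points are the identity $u_\epsilon^C+\tau=u_\epsilon^C$ and the fact that the unique $o\to f$ step is not a silent self-loop. That step must therefore survive in the normal form, which is exactly why a single canonical representative exists.
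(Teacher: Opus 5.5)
Your proof is correct and follows the paper's route. For $(\Rightarrow)$, the elementary rules of $\equiv_\epsilon$ and $\equiv_\tau$ preserve the total elapsed time. For $(\Leftarrow)$, both observations reduce to the same canonical form $(l_\epsilon,u_\epsilon)\xrightarrow{d}(l_\epsilon,u_\epsilon)\xrightarrow{\sigma_\epsilon}(l_f,u_\epsilon)$, where the entry into $l_f$ survives because it is not a silent self-loop. Your version is somewhat more careful than the paper's (checking the rule patterns via $u_\epsilon^C+\tau=u_\epsilon^C$, inserting a zero delay when no delay occurs, and stating $(\Rightarrow)$ as an explicit invariant $\Delta$), but the argument is the same.
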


\begin{proof}
By definition, $\rho, \rho' \in \mathrm{Gen}_{l_f}^{\mathcal{T}}$ spend their entire execution in unobservable locations before terminating exactly upon the first entry into $l_f$. Their mapped observations $\mathcal{M}(\rho)$ and $\mathcal{M}(\rho')$ consist of an initial unobservable state $(l_\epsilon, u_\epsilon)$, interspersed with unobservable events $\sigma_\epsilon$ and time delays, terminating with a transition into $(l_f, u_\epsilon)$.

$(\implies)$ Assume $\llbracket \mathcal{M}(\rho) \rrbracket_{\equiv_{\mathrm{obs}}} = \llbracket \mathcal{M}(\rho') \rrbracket_{\equiv_{\mathrm{obs}}}$. The equivalence relation $\equiv_{\mathrm{obs}}$ applies $\equiv_\epsilon$ (which adds or removes $\sigma_\epsilon$ with $0$ duration) and $\equiv_\tau$ (which splits or merges delays). These operations preserve the cumulative time elapsed. Since both observations end with the identical observable state change to $l_f$, the total time accumulated before this observable change is equal. Thus, $\mathsf{Duration}(\rho) = \mathsf{Duration}(\rho')$.

$(\impliedby)$ Assume $\mathsf{Duration}(\rho) = \mathsf{Duration}(\rho') = d$. For both $\mathcal{M}(\rho)$ and $\mathcal{M}(\rho')$, all intermediate discrete transitions occur between unobservable locations, yielding $(l_\epsilon, u_\epsilon) \xrightarrow{\sigma_\epsilon} (l_\epsilon, u_\epsilon)$. By applying $\equiv_\epsilon$, we remove all these intermediate $\sigma_\epsilon$ symbols. By applying $\equiv_\tau$, we merge all fragmented delays. However, the final transition enters $l_f$, meaning the observable state changes from $(l_\epsilon, u_\epsilon)$ to $(l_f, u_\epsilon)$. Consequently, the equivalence $\equiv_\epsilon$ cannot be applied to erase this specific transition. Both mapped observations are strictly reduced to the identical canonical sequence: 
$$
(l_\epsilon, u_\epsilon) \xrightarrow{d} (l_\epsilon, u_\epsilon) \xrightarrow{\sigma_\epsilon} (l_f, u_\epsilon).
$$
Thus, $\llbracket \mathcal{M}(\rho) \rrbracket_{\equiv_{\mathrm{obs}}} = \llbracket \mathcal{M}(\rho') \rrbracket_{\equiv_{\mathrm{obs}}}$.
\end{proof}

\subsection{Proof of Theorem~\ref*{thm:eto_to_ebto}}\label{app:sec:thm:eto_to_ebto}
\begin{theorem}[\hyperlink{main:thm:eto_to_ebto}{Reduction of ETO to EBTO}]\label{thm:eto_to_ebto}
Let $\mathcal{A}$ be a TA with private location $l_{\mathrm{priv}}$ and final location $l_f$. Let $L_{\mathrm{obs}}=\{l_f\}$, $\Sigma_{\mathrm{obs}} = \varnothing$, and $C_{\mathrm{obs}} = \varnothing$. Let $(\mathcal{A}, \mathrm{Gen}_s^{\mathcal{T}}) = \mathcal{C}_{\mathrm{ExE}}(\mathcal{A},l_{\mathrm{priv}},l_f)$ be the converted system. 

The TA $\mathcal{A}$ is ETO w.r.t. $l_{\mathrm{priv}},l_f$ iff it is EBTO w.r.t. the observation mapping $\mathcal{M}$ and secret evolutions $\mathrm{Gen}_s^{\mathcal{T}}$.
\end{theorem}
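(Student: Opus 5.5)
The plan is to prove the two implications separately. Both rest on Lemma~\ref{lem:duration_strict}, the identity $\mathsf{Duration}(\rho)=\mathsf{Duration}(\Psi(\rho))$ for $\rho\in\mathrm{Gen}_{l_f}^{\mathcal{T}}$ noted before Definition~\ref{def:C_ExE}, and a lifting of runs to evolutions. For a run $\mathsf{Run}=(l_0,\mathbf{0}),(d_0,e_0),(l_1,u_1),\dots,(l_n,u_n)$ of $\mathcal{A}$, I take the evolution
\[
\widehat{\mathsf{Run}} = (l_0,\mathbf{0})\xrightarrow{d_0}(l_0,\mathbf{0}+d_0)\xrightarrow{e_0}(l_1,u_1)\xrightarrow{d_1}\cdots\xrightarrow{e_{n-1}}(l_n,u_n).
\]
This is a generated evolution, since each run step is by definition a legal time transition followed by a legal event transition of $\mathcal{S}(\mathcal{A})$, and it satisfies $\Psi(\widehat{\mathsf{Run}})=\mathsf{Run}$. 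If $\mathsf{Run}$ stops at its first visit to $l_f$, then $\widehat{\mathsf{Run}}\in\mathrm{Gen}_{l_f}^{\mathcal{T}}$. I will also use that private and public runs are disjoint, since public runs never visit $l_{\mathrm{priv}}$.

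\emph{(ETO $\Rightarrow$ EBTO).} Let $\rho\in\mathrm{Gen}_s^{\mathcal{T}}$.
\begin{enumerate}
\item By Definition~\ref{def:C_ExE}, $\Psi(\rho)\in\mathsf{Visit}^{\mathrm{priv}}(\mathcal{A})$, so $d:=\mathsf{Duration}(\rho)=\mathsf{Duration}(\Psi(\rho))$ lies in $\mathsf{DVisit}^{\mathrm{priv}}(\mathcal{A})$.
\item Weak ETO gives a public run $\mathsf{Run}'$ of duration $d$. Set $\rho'=\widehat{\mathsf{Run}'}\in\mathrm{Gen}_{l_f}^{\mathcal{T}}$.
\item $\Psi(\rho')=\mathsf{Run}'$ is public, hence not private, so $\rho'\notin\mathrm{Gen}_s^{\mathcal{T}}$.
\item $\mathsf{Duration}(\rho')=d=\mathsf{Duration}(\rho)$, so Lemma~\ref{lem:duration_strict} ($\Leftarrow$) gives $\llbracket\mathcal{M}(\rho)\rrbracket_{\equiv_{\mathrm{obs}}}=\llbracket\mathcal{M}(\rho')\rrbracket_{\equiv_{\mathrm{obs}}}$.
\end{enumerate}

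\emph{(EBTO $\Rightarrow$ ETO).} Let $d\in\mathsf{DVisit}^{\mathrm{priv}}(\mathcal{A})$ be witnessed by a private run $\mathsf{Run}$.
\begin{enumerate}
\item Set $\rho=\widehat{\mathsf{Run}}$. Then $\rho\in\mathrm{Gen}_{l_f}^{\mathcal{T}}$ and $\Psi(\rho)=\mathsf{Run}$ is private, so $\rho\in\mathrm{Gen}_s^{\mathcal{T}}$.
\item EBTO yields $\rho'\in\mathrm{Gen}^{\mathcal{T}}\setminus\mathrm{Gen}_s^{\mathcal{T}}$ with $\llbracket\mathcal{M}(\rho)\rrbracket_{\equiv_{\mathrm{obs}}}=\llbracket\mathcal{M}(\rho')\rrbracket_{\equiv_{\mathrm{obs}}}$.
\end{enumerate}

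The main obstacle is that $\rho'$ is an arbitrary generated evolution, not a priori in $\mathrm{Gen}_{l_f}^{\mathcal{T}}$, so Lemma~\ref{lem:duration_strict} cannot be applied directly. I will first prove invariants of $\equiv_{\mathrm{obs}}$ by induction on elementary $\equiv_\epsilon/\equiv_\tau$ steps. Observations now live over the alphabet $\{(l_\epsilon,u_\epsilon),(l_f,u_\epsilon)\}$ with $C_{\mathrm{obs}}=\varnothing$, so a delay never changes an observed state.
\begin{enumerate}
\item Each elementary step preserves the sequence of observed-state changes, i.e. the word of observed states with consecutive repetitions collapsed.
\item Each step preserves the total time elapsed before each such change.
\end{enumerate}
Since $\mathcal{M}(\rho)$ has the collapsed state word $(l_\epsilon,u_\epsilon)(l_f,u_\epsilon)$, the same holds for $\mathcal{M}(\rho')$. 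Hence $\rho'$ enters $l_f$ exactly once, after elapsed time $d$, and after that entry it contains only zero delays and $l_f$-to-$l_f$ silent events.

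The awkward residue is a possible trailing segment inside $l_f$. I plan to truncate $\rho'$ at its first entry into $l_f$, obtaining $\rho''\in\mathrm{Gen}_{l_f}^{\mathcal{T}}$ with $\mathsf{Duration}(\rho'')=d$.
\begin{enumerate}
\item If $\rho''$ were secret, i.e. $\Psi(\rho'')$ private, then $\rho'$ would visit $l_{\mathrm{priv}}$ before first reaching $l_f$.
\item I then have to argue that this forces $\rho'\in\mathrm{Gen}_s^{\mathcal{T}}$, a contradiction. This needs $\mathrm{Gen}_s^{\mathcal{T}}$ to be read up to the $\equiv_\tau$-closure required of secret sets in Definition~\ref{def:ebto}, so that trailing zero delays are absorbed.
\item Trailing silent self-loops in $l_f$ are the delicate case. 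If this gap cannot be closed, I will note that $\mathrm{Gen}_s^{\mathcal{T}}$ and its complement should be compared inside $\mathrm{Gen}_{l_f}^{\mathcal{T}}$.
\end{enumerate}

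Granting this step, $\Psi(\rho'')$ reaches $l_f$ for the first time at its last state and does not visit $l_{\mathrm{priv}}$, so it is public. Its duration is $\mathsf{Duration}(\Psi(\rho''))=\mathsf{Duration}(\rho'')=d$. Therefore $d\in\mathsf{D\overline{Visit}^{\mathrm{priv}}}(\mathcal{A})$, which establishes the inclusion required by weak ETO.
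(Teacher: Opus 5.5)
Your outline matches the paper's proof. The direction ETO $\Rightarrow$ EBTO is identical: lift a public run of duration $d$ and apply Lemma~\ref{lem:duration_strict}. The direction EBTO $\Rightarrow$ ETO likewise rests on $\equiv_{\mathrm{obs}}$ preserving the observed entry into $l_f$ and its time stamp.

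The step you leave open is a genuine gap, and it cannot be closed under the stated hypotheses. The paper passes over it by asserting that $\mathcal{M}(\rho')$ records the entry into $l_f$ at time $d$ ``with no subsequent delays'' and that this ``guarantees that $\rho'$ terminates upon entering $l_f$''. That inference ignores exactly the residue you found: zero delays and silent $l_f$-to-$l_f$ events after the entry. The theorem is in fact false as stated.

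Take the deterministic TA with locations $l_0,l_{\mathrm{priv}},l_f$, one clock, all invariants $\mathrm{true}$, and edges $(l_0,a,\mathrm{true},\varnothing,l_{\mathrm{priv}})$, $(l_{\mathrm{priv}},b,\mathrm{true},\varnothing,l_f)$, $(l_f,c,\mathrm{true},\varnothing,l_f)$. Every run reaching $l_f$ passes through $l_{\mathrm{priv}}$. So $\mathsf{DVisit}^{\mathrm{priv}}(\mathcal{A})=\mathbb{R}_{\ge 0}$ and $\mathsf{D\overline{Visit}^{\mathrm{priv}}}(\mathcal{A})=\varnothing$, and $\mathcal{A}$ is not ETO. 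Now let $\rho$ be any secret evolution ending in $(l_f,u)$, and set $\rho'=\rho\xrightarrow{c}(l_f,u)$. This evolution is generated, does not stop at its first entry into $l_f$, and has event sequence $abc$. Hence it lies outside $\mathrm{Gen}_s^{\mathcal{T}}$ even after $\equiv_\tau$-closure. At the same time $\mathcal{M}(\rho')=\mathcal{M}(\rho)\xrightarrow{\sigma_\epsilon}(l_f,u_\epsilon)\equiv_\epsilon\mathcal{M}(\rho)$, so $\mathcal{A}$ is EBTO. Without the closure it is worse still: $\rho\xrightarrow{0}s_n$ is a non-secret witness for every secret $\rho$ of every TA. The reason is that the set produced by $\mathcal{C}_{\mathrm{ExE}}$ is not $\equiv_\tau$-closed, so it is not a secret set in the sense of Definition~\ref{def:ebto}.

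The repair you suggest is the right one, in either of two forms. First, you can restrict the non-secret witness to $\mathrm{Gen}_{l_f}^{\mathcal{T}}$; then EBTO $\Rightarrow$ ETO follows at once from Lemma~\ref{lem:duration_strict}. Second, you can read $\mathrm{Gen}_s^{\mathcal{T}}$ as its $\equiv_\tau$-closure and also assume $\mathcal{A}$ has no edge of the form $(l_f,\sigma,g,r,l_f)$. Under that assumption your truncation argument closes:
\begin{itemize}
\item the collapsed observed-state word forbids leaving $l_f$, so after its entry $\rho'$ contains only delays;
\item these delays sum to zero, so $\rho'\equiv_\tau\rho''$;
\item then $\rho''\in\mathrm{Gen}_s^{\mathcal{T}}$ would put $\rho'$ in the closure, a contradiction.
\end{itemize}
Edges from $l_f$ to other locations are harmless, since they create an observed state change that $\equiv_{\mathrm{obs}}$ preserves.

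One small repair to your invariants. ``Time elapsed before each change'' does not by itself exclude a positive delay after the entry into $l_f$. You also need that every elementary $\equiv_\epsilon$ or $\equiv_\tau$ step preserves total elapsed time, which equals $d$ for $\mathcal{M}(\rho)$.
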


\begin{proof}
$(\implies)$ Assume $\mathcal{A}$ is ETO. Let $\rho \in \mathrm{Gen}_s^{\mathcal{T}}$ be an arbitrary secret evolution. By definition, $\rho \in \mathrm{Gen}_{l_f}^{\mathcal{T}}$ and visits $l_{\mathrm{priv}}$. Let its duration be $d = \mathsf{Duration}(\rho)$.

By the definition of ETO, the existence of this private run implies the existence of a public run in $\overline{\mathsf{Visit}}^{\mathrm{priv}}(\mathcal{A})$ with the identical duration $d$. Let $\rho'$ be the evolution generating this public run. Since $\rho'$ terminates upon entering $l_f$, $\rho' \in \mathrm{Gen}_{l_f}^{\mathcal{T}}$. Since it does not visit $l_{\mathrm{priv}}$, $\rho' \notin \mathrm{Gen}_s^{\mathcal{T}}$. 

For $\rho$ and $\rho'$, we have established $\rho, \rho' \in \mathrm{Gen}_{l_f}^{\mathcal{T}}$ and $\mathsf{Duration}(\rho) = \mathsf{Duration}(\rho')$. By Lemma~\ref{lem:duration_strict}, this implies $\llbracket \mathcal{M}(\rho) \rrbracket_{\equiv_{\mathrm{obs}}} = \llbracket \mathcal{M}(\rho') \rrbracket_{\equiv_{\mathrm{obs}}}$. The system $\mathcal{A}$ is EBTO.

$(\impliedby)$ Assume $\mathcal{A}$ is EBTO. Let $\mathsf{Run}_{\mathrm{priv}} \in \mathsf{Visit}^{\mathrm{priv}}(\mathcal{A})$ be an arbitrary private run with duration $d$. By the definition of $\mathcal{C}_{\mathrm{ExE}}$, there exists a corresponding secret evolution $\rho \in \mathrm{Gen}_s^{\mathcal{T}}$ terminating at $l_f$ with $\mathsf{Duration}(\rho) = d$.

Since $\mathcal{A}$ is EBTO, there exists an evolution $\rho' \in \mathrm{Gen}^{\mathcal{T}}$ such that $\rho' \notin \mathrm{Gen}_s^{\mathcal{T}}$ and $\llbracket \mathcal{M}(\rho) \rrbracket_{\equiv_{\mathrm{obs}}} = \llbracket \mathcal{M}(\rho') \rrbracket_{\equiv_{\mathrm{obs}}}$. 

Since $\rho \in \mathrm{Gen}_{l_f}^{\mathcal{T}}$, its mapped observation $\mathcal{M}(\rho)$ records a single observable location change into $l_f$ exactly at its termination time $d$. Because observational equivalence ($\equiv_{\mathrm{obs}}$) strictly preserves the occurrence and timing of observable state changes, $\mathcal{M}(\rho')$ must also record its first and only entry into $l_f$ exactly at time $d$, with no subsequent delays. This guarantees that $\rho'$ terminates upon entering $l_f$, yielding $\rho' \in \mathrm{Gen}_{l_f}^{\mathcal{T}}$ and $\mathsf{Duration}(\rho') = d$. 

Since $\rho' \in \mathrm{Gen}_{l_f}^{\mathcal{T}}\setminus\mathrm{Gen}_s^{\mathcal{T}}$, it implies that $\rho'$ does not visit $l_{\mathrm{priv}}$. The run $\Psi(\rho')$ extracted from $\rho'$ is therefore a valid public run in $\overline{\mathsf{Visit}}^{\mathrm{priv}}(\mathcal{A})$ with duration $d$. Thus, $\mathsf{DVisit}^{\mathrm{priv}}(\mathcal{A}) \subseteq \mathsf{D\overline{Visit}^{\mathrm{priv}}}(\mathcal{A})$. The system $\mathcal{A}$ is ETO.
\end{proof}

\subsection{Proof of Proposition~\ref*{prop:ebto_to_eto}}\label{app:sec:prop:ebto_to_eto}
\begin{proposition}[\hyperlink{main:prop:ebto_to_eto}{Strict Expressiveness of EBTO over ETO}]\label{prop:ebto_to_eto}
There exists an EBTO instance $(\mathcal{A}, \mathrm{Gen}_s^{\mathcal{T}})$ such that for any TA $\mathcal{A}'$ and $l_{\mathrm{priv}}$ and $l_f$, the ETO property of $(\mathcal{A}', l_{\mathrm{priv}}$ and $l_f)$ is not equivalent to the EBTO property of $(\mathcal{A}, \mathrm{Gen}_s^{\mathcal{T}})$.
\end{proposition}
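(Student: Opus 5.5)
The plan is to mirror the proof of Proposition~\ref{app:prop:ebto_to_lbto}: exhibit an EBTO instance whose secret set cannot arise from $\mathcal{C}_{\mathrm{ExE}}$ for any choice of private and final locations. The statement cannot be read literally as a comparison of two truth values, since one could always pick a trivially ETO or trivially non-ETO $\mathcal{A}'$. I will therefore make precise, as the first step of the proof, the reading that "the ETO property is equivalent to the EBTO property" means the instance is obtainable through the reduction of Theorem~\ref{thm:eto_to_ebto}. That is, there exist $\mathcal{A}'$, $l_{\mathrm{priv}}$, $l_f$ with $\mathcal{C}_{\mathrm{ExE}}(\mathcal{A}',l_{\mathrm{priv}},l_f)=(\mathcal{A},\mathrm{Gen}_s^{\mathcal{T}})$, so that in particular $\mathcal{A}'=\mathcal{A}$ and the same SG is used. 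This is exactly the sense in which Theorem~\ref{thm:eto_to_ebto} establishes equivalence.

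For the construction I reuse the one-location TA of Fig.~\ref{fig:ta_structure}: $L=\{l_0\}$, the self-loop $(l_0,a,\text{true},\{x\},l_0)$, and $I(l_0)=\text{true}$. The secret set is the $\equiv_\tau$-closed set of evolutions ending with $a$ followed by a trailing delay $\tau>\chi$, exactly as in the proof of Proposition~\ref{app:prop:ebto_to_lbto}. It is a valid secret set because it is defined as a union of $\equiv_\tau$-classes. The witness is $\rho_1=(l_0,0)\xrightarrow{a}(l_0,0)\xrightarrow{\chi_1}(l_0,\chi_1)$ with $\chi_1>\chi$, which lies in $\mathrm{Gen}_s^{\mathcal{T}}$.

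The key step is structural. By Definition~\ref{def:C_ExE}, every secret evolution produced by $\mathcal{C}_{\mathrm{ExE}}$ lies in $\mathrm{Gen}_{l_f}^{\mathcal{T}}$. As noted before Definition~\ref{def:C_ExE}, such evolutions terminate exactly upon the first entry into $l_f$, so they end with a discrete transition. Now take any admissible $l_f$, which here must be $l_f=l_0$. Every evolution first enters $l_0$ at its initial state, so $\mathrm{Gen}_{l_f}^{\mathcal{T}}$ contains only the zero-length evolution $((l_0,\mathbf{0}))$, possibly together with its $\equiv_\tau$-variants consisting of zero delays. None of these evolutions contains the event $a$ or a positive trailing delay. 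Hence $\rho_1\notin \mathrm{Gen}_{l_f}^{\mathcal{T}}$, so $\rho_1$ is not in any $\mathcal{C}_{\mathrm{ExE}}$-secret set, and therefore $\mathrm{Gen}_s^{\mathcal{T}}\neq$ the secret set of $\mathcal{C}_{\mathrm{ExE}}(\mathcal{A},l_{\mathrm{priv}},l_f)$ for every $l_{\mathrm{priv}},l_f$.

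To make the non-equivalence semantic rather than merely syntactic, I would also compare verdicts. With the observation model of Theorem~\ref{thm:eto_to_ebto}, take $\Sigma_{\mathrm{obs}}=\{a\}$ and $C_{\mathrm{obs}}=\varnothing$. The EBTO instance is then \emph{not} opaque: any evolution $\equiv_{\mathrm{obs}}$-equivalent to $\rho_1$ must exhibit the same observable $a$ followed by the same trailing delay $\chi_1>\chi$, and is therefore itself secret. By contrast, on $\mathcal{A}$ the set $\mathsf{DVisit}^{\mathrm{priv}}(\mathcal{A})$ is either empty or $\{0\}$ and is matched trivially, so every ETO instance on $\mathcal{A}$ is opaque. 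The verdicts therefore differ.

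The main obstacle is the quantifier over arbitrary $\mathcal{A}'$. A fully unrestricted $\mathcal{A}'$ makes the claim false as stated, so the proof will fix the reduction-based reading explicitly at the outset and argue that trailing-delay secrets are outside the image of $\mathcal{C}_{\mathrm{ExE}}$. The reason is that ETO, by construction, records duration only up to the first arrival at $l_f$, which is the same information loss identified as suffix blindness for LBTO.
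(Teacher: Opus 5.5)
\textbf{There is a genuine gap: the verdict comparison fails.} That paragraph is the only part of your proposal aimed at what the paper actually proves, so the failure matters.

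On your one-location TA the only admissible choice is $l_{\mathrm{priv}}=l_f=l_0$. Every location along a run is then $l_f$, so a run with $l_n=l_f$ and $l_i\neq l_f$ for all $i<n$ must have length $0$. The zero-length run $(l_0,\mathbf{0})$ is private, so $\mathsf{DVisit}^{\mathrm{priv}}(\mathcal{A})=\{0\}$. No run is public, because every run starts in $l_{\mathrm{priv}}$, so $\mathsf{D\overline{Visit}^{\mathrm{priv}}}(\mathcal{A})=\varnothing$. Hence $\{0\}\not\subseteq\varnothing$, and $(\mathcal{A},l_0,l_0)$ is \emph{not} ETO; nothing is ``matched trivially''. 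With $\Sigma_{\mathrm{obs}}=\{a\}$ your EBTO instance is also not opaque, so the two verdicts coincide and you get no separation.

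Your observation model is also inconsistent. Theorem~\ref{thm:eto_to_ebto} fixes $\Sigma_{\mathrm{obs}}=\varnothing$. Under that model $a$ is masked, and $\llbracket\mathcal{M}(\rho_1)\rrbracket_{\equiv_{\mathrm{obs}}}$ equals the class of the non-secret evolution $(l_0,\mathbf{0})\xrightarrow{\chi_1}(l_0,\chi_1)$, so your instance becomes EBTO. The resulting mismatch with ETO comes only from the degenerate zero-length private run when $l_f$ is initial, not from any information loss in ETO.

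Your structural argument is correct for the claim you chose: every $\mathcal{C}_{\mathrm{ExE}}$-secret set lies in $\mathrm{Gen}_{l_f}^{\mathcal{T}}$, so none contains an evolution ending in a positive trailing delay. You are also right that the quantifier over arbitrary $\mathcal{A}'$ cannot be read as a comparison of truth values. The paper's proof likewise only tests the natural encoding on the same TA.

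However, the image-of-$\mathcal{C}_{\mathrm{ExE}}$ reading ignores the observation mapping, and that is exactly where the paper's separation lives. The paper's TA has two paths that spend delays $(c_1,c_2)$ and $(c_2,c_1)$ before reaching $l_f$. Its secret set (the $\equiv_\tau$-class of $\rho_1$, as Definition~\ref{def:ebto} requires) is essentially what $\mathcal{C}_{\mathrm{ExE}}(\mathcal{A},l_1,l_f)$ produces, so under your reading it would not even be a witness. Yet with $a,b$ observable, the EBTO observer sees when $a$ occurs and the instance is not EBTO. ETO with $l_{\mathrm{priv}}=l_1$ compares only $c_1+c_2=c_2+c_1$ and holds.

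Your suffix-blindness example only transposes the representability gap of Proposition~\ref{app:prop:ebto_to_lbto}. To get the verdict separation the paper intends, you need an example of the paper's kind, with $l_{\mathrm{priv}}\neq l_f$. In it, collapsing each run to a single duration must erase timing information that the EBTO observer retains.
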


\begin{proof}
We construct a counter-example where the secrecy depends on the non-commutative temporal ordering of delays, which is erased by the summation in ETO.

Consider a TA, as shown in Fig.~\ref{fig:timed_automaton}, $\mathcal{A} = (L, l_0, C, \Sigma, E, I)$ and defined as follows:
\begin{itemize}
    \item $L = \{l_0, l_1, l_2, l_f\}$; $C = \{x\}$; $\Sigma = \{a, b\}$; $I(l) = \text{true}$ for all $l$.
    \item Let $c_1, c_2 \in \mathbb{R}_{>0}$ be two constants such that $c_1 \neq c_2$.
    \item The set of edges $E$ forms two paths:
    \begin{itemize}
        \item \textit{Path 1:} $l_0 \xrightarrow{a, x=c_1, \{x\}} l_1$ and $l_1 \xrightarrow{b, x=c_2, \varnothing} l_f$.
        \item \textit{Path 2:} $l_0 \xrightarrow{a, x=c_2, \{x\}} l_2$ and $l_2 \xrightarrow{b, x=c_1, \varnothing} l_f$.
    \end{itemize}
\end{itemize}

\begin{figure}[t]
    \centering
    \begin{tikzpicture}[
        ->,                      
        >=Stealth,               
        shorten >=1pt,           
        auto,                    
        node distance=1.5cm,       
        semithick,               
        state/.style={circle, draw, minimum size=1cm} 
    ]

        \node[state, initial, initial text={}] (l0) {$l_0$};
        
        \node[state] (l1) [above right=of l0] {$l_1$};
        \node[state] (l2) [below right=of l0] {$l_2$};
        
        \node[state, accepting] (lf) [below right=of l1] {$l_f$};


        
        \path (l0) edge node[align=center] {$a$\\ $x=c_1$\\ $x:=0$} (l1)
              (l1) edge node[align=center] {$b$\\ $x=c_2$\\} (lf);

        \path (l0) edge [swap] node[align=center] {$a$\\ $x=c_2$\\ $x:=0$} (l2)
              (l2) edge [swap] node[align=center] {$b$\\ $x=c_1$} (lf);

    \end{tikzpicture}
    \caption{Timed Automaton $\mathcal{A}$ illustrating path dependency on clock constraints.}
    \label{fig:timed_automaton}
\end{figure}
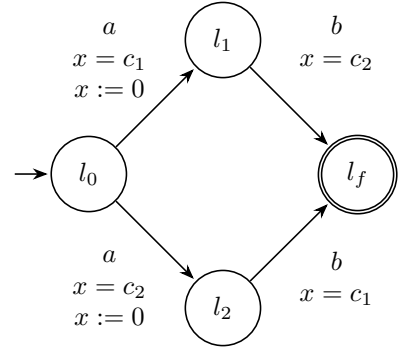

The TA generates two strict evolutions:
\begin{itemize}
    \item $\rho_1$ (via $l_1$):
    \[
    \rho_1 = (l_0, 0) \xrightarrow{c_1} (l_0, c_1) \xrightarrow{a} (l_1, 0) \xrightarrow{c_2} (l_1, c_2) \xrightarrow{b} (l_f, c_2).
    \]
    \item $\rho_2$ (via $l_2$):
    \[
    \rho_2 = (l_0, 0) \xrightarrow{c_2} (l_0, c_2) \xrightarrow{a} (l_2, 0) \xrightarrow{c_1} (l_2, c_1) \xrightarrow{b} (l_f, c_1).
    \]
\end{itemize}

Let $\mathrm{Gen}_s^{\mathcal{T}} = \{\rho_1\}$, $\Sigma_{\mathrm{obs}} = \{a, b\}, L_{\mathrm{obs}}=\varnothing$, and $ C_{\mathrm{obs}}=\varnothing$. Since $c_1 \neq c_2$, then $\llbracket \mathcal{M}(\rho_1) \rrbracket_{\equiv_{\mathrm{obs}}} \neq \llbracket \mathcal{M}(\rho_2) \rrbracket_{\equiv_{\mathrm{obs}}}$. Thus, the TA is not EBTO.

To attempt capturing $\rho_1$ as a secret in ETO, we set $l_{\mathrm{priv}} = l_1$ and final location as $l_f$.
\begin{itemize}
    \item The duration of the private run $\Psi(\rho_1)$ is $c_1 + c_2$.
    \item The duration of the public run $\Psi(\rho_2)$ is $c_2 + c_1$.
\end{itemize}
Due to the commutativity of addition in $\mathbb{R}_{\geq 0}$, $c_1 + c_2 = c_2 + c_1$. Thus, $\mathsf{DVisit}^{\mathrm{priv}}(\mathcal{A}) \subseteq \mathsf{D\overline{Visit}^{\mathrm{priv}}}(\mathcal{A})$, and the TA is ETO.

Even with strict evolutions and ideal location labeling, ETO is structurally blind to the permutation of time intervals, whereas EBTO correctly identifies the leakage.
\end{proof}

\end{document}